\definecolor{LinkColor}{rgb}{0,0,1}
\definecolor{LinkColor2}{rgb}{0,0.5,0}
\definecolor{lbcolor}{rgb}{0.85,0.85,0.85}
\definecolor{FrameColor}{rgb}{0.85,0.85,0.85}
\newcommand*\patchAmsMathEnvironmentForLineno[1]{%
	\expandafter\let\csname old#1\expandafter\endcsname\csname #1\endcsname
	\expandafter\let\csname oldend#1\expandafter\endcsname\csname end#1\endcsname
	\renewenvironment{#1}%
	{\linenomath\csname old#1\endcsname}%
	{\csname oldend#1\endcsname\endlinenomath}}%
\newcommand*\patchBothAmsMathEnvironmentsForLineno[1]{%
	\patchAmsMathEnvironmentForLineno{#1}%
	\patchAmsMathEnvironmentForLineno{#1*}}%
\numberwithin{equation}{section}
\newtheorem{theorem}{Theorem}[section]
\newtheorem{lemma}[theorem]{Lemma}
\newtheorem{definition}[theorem]{Definition}
\theoremstyle{definition}
\newtheorem{remark}[theorem]{Remark}
\renewenvironment{proof}[1][\proofname]{%
	\par\pushQED{\qed}\normalfont%
	\topsep6\p@\@plus6\p@\relax
	\trivlist\item[\hskip\labelsep\bfseries#1\@addpunct{.}]%
	\ignorespaces
}{%
	\popQED\endtrivlist\@endpefalse
}
\renewcommand\paragraph{\@startsection{paragraph}{4}{\z@}%
	{1ex \@plus1ex \@minus.2ex}%
	{-1em}%
	{\normalfont\normalsize\bfseries}}
\renewcommand\subparagraph{\@startsection{paragraph}{4}{\z@}%
	{1ex \@plus1ex \@minus.2ex}%
	{-1em}%
	{\normalfont\normalsize\itshape}}
\newcommand{\norm}[1]{\ensuremath\left\| #1 \right\|}
\newcommand{\bignorm}[1]{\ensuremath\big\| #1 \big\|}
\newcommand{\abs}[1]{\ensuremath\left|#1 \right|}
\def\R{\mathbb R}
\def\eps{\varepsilon}
\def\supp{\mathrm{supp\,}}
\DeclareMathOperator{\curl}{curl}
\DeclareMathOperator{\divg}{div}
\def\dint{\;\mathrm d}
\def\dtau{\;\mathrm d\tau}
\def\dx{\;\mathrm dx}
\def\dv{\;\mathrm dp}
\def\dz{\;\mathrm dz}
\def\dy{\;\mathrm dy}
\def\ds{\;\mathrm ds}
\def\d{{\mathrm{d}}}
\def\del{\partial}
\def\delt{\partial_{t}}
\def\delx{\partial_{x}}
\def\delp{\partial_{p}}
\def\delr{\partial_{r}}
\def\E{\mathcal E}
\def\F{\mathcal F}
\def\G{\mathcal G}
\def\C{\mathcal{C}}
\def\H{\mathcal{H}}
\def\RR{\mathcal{R}}
\def\SS{\mathcal{S}}
\def\ov{\overline}
\newcommand{\suchthat}{\;\ifnum\currentgrouptype=16 \middle\fi|\;}
\font\myfont=cmr8 at 15pt
\begin{document}
	
%
%

\title{\myfont
	On the two and one-half dimensional Vlasov--Poisson system\\ 
	with an external magnetic field:\\
	Global well-posedness and stability of confined steady states}

\bigskip

\author{Patrik Knopf \footnotemark[1] \and Jörg Weber \footnotemark[2]}

\date{\today}

\renewcommand{\thefootnote}{\fnsymbol{footnote}}
\footnotetext[1]{Department of Mathematics, University of Regensburg, 93053 Regensburg, Germany \\
	\tt(\href{mailto:patrik.knopf@ur.de}{patrik.knopf@ur.de})}
\footnotetext[2]{Centre of Mathematical Sciences, Lund University, 221 00 Lund, Sweden \\
	\tt(\href{mailto:jorg.weber@math.lu.se}{jorg.weber@math.lu.se})}

\maketitle

\begin{center}
	\small
	{
		\textit{This is a preprint version of the paper. Please cite as:} \\  
		P.~Knopf and J.~Weber, Nonlinear Anal. Real World Appl. 65:103460 (2022) \\ 
		\url{https://doi.org/10.1016/j.nonrwa.2021.103460}
	}
	\bigskip
\end{center}

%
%

\begin{abstract}
The time evolution of a two-component collisionless plasma is modeled by the Vlasov--Poisson system. In this work, the setting is two and one-half dimensional, that is, the distribution functions of the particles species are independent of the third space dimension. We consider the case that an external magnetic field is present in order to confine the plasma in a given infinitely long cylinder. After discussing global well-posedness of the corresponding Cauchy problem, we construct stationary solutions whose support stays away from the wall of the confinement device. Then, in the main part of this work we investigate the stability of such steady states, both with respect to perturbations of the initial data, where we employ the energy-Casimir method, and also with respect to perturbations of the external magnetic field.
\\[1ex]
\textit{Keywords:} magnetic confinement, nonlinear partial differential equations, stationary solutions, Vlasov--Poisson equation, energy-Casimir method.\\[1ex]
\textit{Mathematics Subject Classification:
35B35;  
35Q83;  
82D10.  
} 
\end{abstract}

\setlength\parindent{0ex}
\setlength\parskip{1ex}

\bigskip


%
%

\section{Introduction}

The investigation of a high-temperature plasma that is influenced by an exterior magnetic field is one of the most important aspects in the research on thermonuclear fusion. In this context, the \emph{magnetic confinement} of plasmas is of particular interest. The idea of magnetic confinement is to apply magnetic fields to influence the plasma in such a way that it keeps a certain distance to the reactor wall. This is necessary because a plasma colliding with the wall will cool down (which interrupts the fusion process) and might even damage the reactor. 
The most common reactor types for magnetic confinement are toroidal devices, for instance, Tokamaks, Stellerators and reversed field pinches. However, there are also linear (usually cylindrical) confinement devices, such as $z$-pinch or $\theta$-pinch configurations. Their advantage is that they can be described more easily by mathematical models. For more details on thermonuclear fusion from a physics point of view we refer to \cite{Stacey}.

In this paper, we intend to study the behavior of a plasma in a long cylindrical reactor. As a simplification for the mathematical analysis, we assume that the length of this cylinder is infinite. Moreover, we suppose that the volume charge density of the plasma is constant along lines parallel to the cylinder axis. 

There are different possibilities of describing the time evolution of plasmas by means of mathematical equations. A very popular ansatz is to regard the plasma as an electrically conducting fluid. Its behavior is then described by the \emph{magneto-hydrodynamic (MHD) equations}, which are a combination of the Navier–Stokes equations of fluid dynamics and Maxwell’s equations of electro-magnetism.

Another way to describe the motion of a plasma are kinetic equations. These models are particularly suitable if collisions among the ions and electrons can be neglected. It has been observed that such collisions only play a minor role in very hot and sufficiently diluted plasmas. The motion of a plasma can then be described by a \emph{Vlasov equation} which will be the model of choice in our work.

To describe the time evolution of a two-component plasma (consisting of both electrons and positively charged ions) in a cylinder of infinite length, we consider the following Vlasov--Poisson system that is equipped with an external magnetic field $B=\curl_x A$:
\begin{subequations}\label{eq:WholeSystem}
	\begin{alignat}{2}
	&\delt f^\pm + p\cdot\delx f^\pm \pm (-\delx U+p\times B)\cdot\delp f^\pm =0 
	&&\quad\mathrm{on}\ \R_0^+ \times\R^2\times\R^3,
	\\[1ex]
	&-\Delta_x U=4\pi \rho, 
	&&\quad\mathrm{on}\ \R_0^+ \times\R^2,\label{eq:Poisson}\\[1ex]
	&\lim_{|x|\to\infty}(U(t,x)+2M\ln|x|)=0 
	&&\quad\mathrm{on}\ \R_0^+ ,\label{eq:PoissonBC}\\[1ex]
	&\rho^\pm = \int_{\R^3} f^\pm(\cdot,\cdot,p) \, \mathrm dp, 
	    \quad \rho = \rho^+ - \rho^-
	&&\quad\mathrm{on}\ \R_0^+ \times\R^2,\label{eq:Rho}\\[1ex]
	&B = \curl_x A 
	&&\quad\mathrm{on}\ \R_0^+ \times\R^2,\label{eq:Mag}\\[1ex]
	&f^\pm(0)=\mathring f^\pm 
	&&\quad\mathrm{on}\ \R^2\times\R^3.\label{eq:Initf}
	\end{alignat}
\end{subequations}
Here, the plasma is represented by the electron density $f^-=f^-(t,x,p) \ge 0$ and the ion density $f^+=f^+(t,x,p) \ge 0$, which both depend on time $t\in\R_0^+:=[0,\infty[$, position $x\in\R^2$ and velocity $p\in\R^3$. In the standard three-dimensional Vlasov--Poisson system, the position coordinate $x$ is also a vector in $\R^3$. However, as we consider the plasma in a cylinder of infinite length, we assume that the density function is constant along the direction of the cylinder axis, which in turn is supposed to point in the $x_3$ direction. This means that the distribution function $f$ does not depend on the $x_3$ variable and it thus suffices to consider $x=(x_1,x_2)\in\R^2$.
Since $f$ may still depend on velocity vectors in $\R^3$, the system \eqref{eq:WholeSystem} is referred to as the \emph{two and one-half dimensional Vlasov--Poisson system}. The initial condition for $f$ is given in \eqref{eq:Initf}.
In the following, we write $z=x_3$ to denote the third spatial component and thus, the $x_3$-axis will also be referred to as the $z$-axis.

The scalar function $U$ denotes the electrostatic potential which is induced by the charge of the particles. We point out that in the Vlasov--Poisson system, electromagnetic effects such as the induction of an internal magnetic field are neglected.
Hence, the field equation for $U$ is the Poisson equation \eqref{eq:Poisson} with the charge density $\rho = \rho^+ - \rho^-$, given by \eqref{eq:Rho}, as its right-hand side. 
The internal self-consistent electric field, which drives the movement of the charged particles, is then given as $E=-\delx U$.
In order to determine $U$ uniquely for given $\rho$, we impose the boundary condition \eqref{eq:PoissonBC} at spatial infinity. Here, $M=\int_{\R^2}\rho\dx$ is the total charge in any cross section of the cylinder. Note that $M$ is a conserved quantity, i.e., it does not depend on time. We point out that in our scenario, the electric potential $U$ does not vanish at spatial infinity as it would be the case in a fully three-dimensional setting. This is because the fundamental solution $-\frac{1}{2\pi}\ln|\cdot|$ of Poisson's equation in $2D$ does not vanish at infinity. For a more detailed discussion on this issue, we refer to \cite{knopfStSt}. Throughout this paper we use modified Gaussian units such that all physical constants (mass and modulus of the charge of a particle) are normalized to unity.

In \cite{knopfStSt}, confined steady states for the two-dimensional analogue of \eqref{eq:WholeSystem} (meaning that ${x,p\in\R^2}$) were constructed. The same has been done in \cite{weberStSt} for the two and one-half dimensional relativistic Vlasov--Maxwell system. In both papers an additional symmetry, namely rotational symmetry about the $z$-axis, is assumed, and the ansatz $$f_0^\pm=\eta^\pm(\E^\pm,\F^\pm,\G^\pm)$$ 
for a steady state is made, where $\E^\pm$, $\F^\pm$, and $\G^\pm$ are invariants of the characteristic flow. For the steady states in our setting, the particle energy $\E^\pm=\frac12|p|^2\pm U$, which is (for the variables $(x,p\pm A)$) the Hamiltonian governing the motion of the particles (ions or electrons, respectively), is the invariant corresponding to time symmetry. Moreover, from the Lagrangian $$\mathcal L^\pm=-\sqrt{1-|(\dot x,\dot z)|^2}\mp U\pm (\dot x,\dot z)\cdot A \quad\text{on $\R^3\times\R^3$},$$ 
we derive the invariant quantities corresponding to the spatial symmetries: the canonical angular momentum $\F^\pm=\partial_{\dot\varphi}\mathcal L^\pm=r(p_\varphi\pm A_\varphi)$ corresponds to rotational invariance, and $\G^\pm=\partial_{\dot z}\mathcal L=p_3\pm A_3$, that is the third component of the canonical momentum, corresponds to translational invariance. Here and throughout this paper, we frequently use cylindrical coordinates $(r,\varphi,z)$ on $\R^3$. In this context, $v_r$, $v_\varphi$, and $v_3$ denote the coordinates of a vector $v$ in the corresponding local coordinate system. 

In this paper we study the stability properties of steady states of the separated form 
\begin{align*}
    f_0^\pm=\vartheta^\pm(\E^\pm)\, \psi^\pm(\F^\pm,\G^\pm)
\end{align*}
with respect to both perturbations of the initial data and perturbations of the external magnetic field. As for stability with respect to perturbations of the initial data, it is well-known in the study of Vlasov type systems (both in the gravitational and plasma physics case) that a certain monotonicity assumption on $\vartheta^\pm$ is important; see \ref{COND:S4}.

\paragraph{Comparison to related results in the literature.}
The mathematical investigation of stationary solutions in kinetic theory and their stability has an extensive history. 
A detailed study of steady states of the three-dimensional Vlasov--Poisson system, both in the gravitational case (without any external field) and in the plasma physics case (where no confinement properties are taken into account), can be found in \cite{ReinBook} and the large amount of references therein. In particular, we refer to \cite{GuoRein,LMR,rein-steady,Rein_Casimir,Rein_Casimir_2,rein02}. Within this family of works, we especially want to point out \cite{BRV_Stability}, which is closest to our strategy.
For the mathematical investigation of steady states in similar Vlasov type models, we refer to \cite{guo-grotta,duan,duan2,glassey,batt2} to mention but a few.

In order to investigate steady states of plasmas described by Vlasov type models, the system has to be equipped with a suitable external field. Besides the aforementioned papers \cite{knopfStSt} and \cite{weberStSt}, confined steady states in various settings under the influence of an external magnetic field were constructed in \cite{belyaeva2,belyaeva3,skubachevskii}. 
In the non-stationary case, an external magnetic field becoming infinite at the boundary of the confinement device was used in \cite{CCM12,CCM14,CCM15,CCM16} to confine a Vlasov--Poisson plasma. More information on the confinement problem in various configurations with different Vlasov type plasma models (in some cases addressing linear stability), can be found in \cite{han,NNS,NS,Zhang,ZHR,dolbeault}. 
We further want to mention the paper \cite{BGT} dealing with instability of $z$-pinches in an MHD model.

\paragraph{Structure of this paper.}
After stating some important auxiliary results in Section~\ref{sec:prel}, we discuss the global-in-time classical well-posedness of system \eqref{eq:WholeSystem} in Section~\ref{sec:global-well-posed}. Here, we employ some of the strategies that have already been used in the literature (see \cite{batt,kurth,lions-perthame,pfaffelmoser,schaeffer} for results on local and global well-posedness) to tackle the three-dimensional Vlasov--Poisson system.
In Section~\ref{sec:Existence}, we briefly outline the construction of confined steady states in our setting proceeding similarly to \cite{knopfStSt,weberStSt}. The main novelties of the present paper are the stability results for such confined steady states, which are presented in Section~\ref{sec:stability}. Restricting ourselves to a $\theta$-pinch configuration (i.e., the external magnetic field points in the direction of the cylinder axis), we first prove stability of a previously constructed confined steady state with respect to perturbations of the initial data (see Theorem \ref{thm:Stability}). This is done by suitably adapting the strategy of \cite{BRV_Stability}. Furthermore, for perturbations of the external magnetic field, we also obtain a stability estimate (see Theorem~\ref{thm:Stability_B}). However, this estimate has the drawback of not being uniform in time. Eventually both results can be combined to obtain a stability estimate with respect to perturbations of both the initial data and the external magnetic field (see Theorem~\ref{thm:Stability_C}).














%
%

\section{Preliminaries}\label{sec:prel}



First, we recall the following classical result on the Newtonian potential in two dimensions (see, e.g., \cite[Theorem~10.2 and Theorem~10.3]{lieb-loss}).
\begin{lemma} \label{LEM:EP}
    Let $\varrho \in C^1_c(\R^2)$ be arbitrary. We set 
    $ M_\varrho := \int_{\R^2}\varrho\dx$.
    
    Then the function
    \begin{align}
    \label{DEF:URHO}
        U_\varrho: \R^2\to\R, \quad
        U_\varrho(x) := - 2 \int_{\R^2} \varrho(y) \ln\abs{x-y} \dy
    \end{align}
    belongs to $C^2(\R^2)$ and is the unique classical solution of Poisson's equation
    \begin{gather*}
	-\Delta_x U_\varrho =4\pi \varrho
	\quad\mathrm{on}\ \R^2,
	\\[1ex]
	\lim_{|x|\to\infty}(U_\varrho(t,x)+2M_\varrho \ln|x|)=0. 
	\end{gather*}
	The gradient of $U_\varrho$ is given as
	\begin{align*}
        \nabla U_\varrho(x) = - 2 \int_{\R^2} \varrho(y) \frac{x-y}{\abs{x-y}^2} \dy,
        \quad x\in\R^2.
    \end{align*}
    In particular, this entails that $U_\varrho(x) = \mathcal O(\abs{x}^{-1})$ as $\abs{x}\to\infty$.
	Moreover, we obtain the estimates
	\begin{align}
	    \label{EST:DU}
	    \norm{\nabla U_\varrho}_{\infty} 
	    &\le c_1 \norm{\varrho}_{q}^{q/2}
	    \norm{\varrho}_{\infty}^{1-q/2} 
	    \quad\text{for all $q\in[1,2[$,}\\
	    \label{EST:D2U}
	    \norm{D^2 U_\varrho}_{\infty}
	    &\le c_2 \Big[ \big(1+\norm{\varrho}_{\infty}\big) \Big(1+\ln_+\big(\norm{\nabla\varrho}_{\infty}\big)\Big) + \norm{\varrho}_{1}\Big]
	\end{align}
	with positive constants $c_1=c_1(q)$ and $c_2$ that do not depend on $\varrho$.
\end{lemma}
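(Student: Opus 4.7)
The plan is to break the argument into four blocks: existence and regularity, boundary behavior and uniqueness, and then the two quantitative estimates separately. The first two blocks are classical, and essentially all of the technical effort is concentrated in the second-derivative estimate.

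\textbf{Existence, regularity, and Poisson's equation.} Since $\varrho\in C^1_c(\R^2)$ and $\ln\abs{\cdot}$ is locally integrable, the integral \eqref{DEF:URHO} defines a continuous function. The kernel $(x-y)/\abs{x-y}^2$ is still locally integrable in two dimensions, so differentiation under the integral sign yields the stated formula for $\nabla U_\varrho$. For the second derivatives I would appeal to the fundamental-solution identity $-\Delta\bigl(-\tfrac{1}{2\pi}\ln\abs{\cdot}\bigr)=\delta_0$: a mollification argument, or equivalently a cutoff-and-integration-by-parts calculation that transfers the derivative onto $\nabla\varrho\in C_c(\R^2)$, shows that $U_\varrho\in C^2(\R^2)$ and $-\Delta_x U_\varrho=4\pi\varrho$ pointwise.

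\textbf{Boundary condition at infinity and uniqueness.} For the prescribed decay I would write
\[
U_\varrho(x)+2M_\varrho\ln\abs{x}=-2\int_{\R^2}\varrho(y)\bigl(\ln\abs{x-y}-\ln\abs{x}\bigr)\dy,
\]
and expand $\ln\bigl(\abs{x-y}/\abs{x}\bigr)=-x\cdot y/\abs{x}^{2}+\mathcal{O}(\abs{x}^{-2})$ uniformly for $y\in\supp\varrho$ as $\abs{x}\to\infty$; this yields both the prescribed limit and the $\mathcal{O}(\abs{x}^{-1})$ rate. Uniqueness then follows because any two solutions of this boundary-value problem differ by a harmonic function on $\R^2$ that vanishes at infinity, which must be identically zero by Liouville.

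\textbf{Estimate \eqref{EST:DU}.} For $R>0$ I would split the integral in the gradient formula at $\abs{x-y}=R$. Bounding the inner piece by $\norm{\varrho}_{\infty}\int_{\abs{z}<R}\abs{z}^{-1}\dz=2\pi R\norm{\varrho}_{\infty}$, and applying Hölder's inequality together with the $L^q$ bound on the outer piece, yields the tail integral $\int_{\abs{z}\ge R}\abs{z}^{-q/(q-1)}\dz\sim R^{2-q/(q-1)}$, finite precisely for $q\in[1,2[$. The resulting two contributions of order $R\,\norm{\varrho}_{\infty}$ and $R^{1-2/q}\,\norm{\varrho}_{q}$ balance at $R\sim\bigl(\norm{\varrho}_{q}/\norm{\varrho}_{\infty}\bigr)^{q/2}$, which reproduces \eqref{EST:DU}.

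\textbf{Estimate \eqref{EST:D2U}.} This is the main obstacle, because the formal kernel for $D^2 U_\varrho$ behaves like $\abs{x-y}^{-2}$ and is not locally integrable; the Calderón--Zygmund structure must be exploited carefully. I would represent $D^2U_\varrho(x)$ as a principal-value singular integral plus a local multiple of $\varrho(x)$, and use the $C^1$ regularity of $\varrho$ to replace $\varrho(y)$ by $\varrho(y)-\varrho(x)$ inside the p.v.\ integral. Splitting at two radii $\delta<R$, the Lipschitz bound $\abs{\varrho(y)-\varrho(x)}\le\norm{\nabla\varrho}_{\infty}\abs{x-y}$ kills one power of the singularity on $\abs{x-y}<\delta$, giving $\mathcal{O}\bigl(\norm{\nabla\varrho}_{\infty}\delta\bigr)$; the annulus $\delta\le\abs{x-y}\le R$ contributes $\mathcal{O}\bigl(\norm{\varrho}_{\infty}\ln(R/\delta)\bigr)$ via the $L^\infty$ bound; the far region yields $\mathcal{O}\bigl(R^{-2}\norm{\varrho}_{1}\bigr)$ via the $L^1$ bound. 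Taking $R=1$ and $\delta=\min\{1,\norm{\nabla\varrho}_{\infty}^{-1}\}$ produces the logarithmic factor $\ln_+\norm{\nabla\varrho}_{\infty}$ and assembles the three pieces into \eqref{EST:D2U}. The delicate point is the careful bookkeeping of the principal-value contributions and boundary terms that appear when one differentiates through the singular kernel.
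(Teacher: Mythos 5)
Your proposal is correct, but note that the paper does not actually prove this lemma: it is stated with a pointer to the literature (Lieb--Loss, Theorems 10.2 and 10.3) for the potential-theoretic part, and the two quantitative estimates \eqref{EST:DU} and \eqref{EST:D2U} are the two-dimensional analogues of bounds that are standard in the Vlasov--Poisson literature and are likewise taken for granted. What you supply is essentially the self-contained proof behind those citations, and all four blocks check out: the asymptotic expansion $\ln(\abs{x-y}/\abs{x})=-x\cdot y/\abs{x}^2+\mathcal O(\abs{x}^{-2})$ gives the boundary condition and the $\mathcal O(\abs{x}^{-1})$ rate (the paper's phrasing ``$U_\varrho(x)=\mathcal O(\abs{x}^{-1})$'' is best read, as you implicitly do, as a statement about $\nabla U_\varrho$ or about $U_\varrho+2M_\varrho\ln\abs{x}$, since $U_\varrho$ itself grows logarithmically when $M_\varrho\neq 0$); the one-parameter splitting with $R\sim(\norm{\varrho}_q/\norm{\varrho}_\infty)^{q/2}$ reproduces \eqref{EST:DU} exactly, including the restriction $q<2$ from integrability of $\abs{z}^{-q/(q-1)}$ at infinity; and the three-radius decomposition for $D^2U_\varrho$, using the mean-zero Calder\'on--Zygmund kernel near the singularity together with the Lipschitz bound on $\varrho$, the $L^\infty$ bound on the annulus, and the $L^1$ bound far away, assembles into \eqref{EST:D2U} with the choice $\delta=\min\{1,\norm{\nabla\varrho}_\infty^{-1}\}$ (the degenerate case $\nabla\varrho\equiv 0$ forces $\varrho\equiv 0$ and is vacuous). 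The only point you flag but do not carry out --- justifying the principal-value representation of $D^2U_\varrho$ plus the local multiple of $\varrho(x)$ --- is indeed the one place where a fully rigorous write-up needs care, and it is exactly what the cited Lieb--Loss results (or a mollification of the kernel) provide.
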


Next, we state two important properties of the potential energy: an upper bound is provided by the logarithmic Hardy--Littlewood--Sobolev inequality (see \cite[Theorem~1]{CarlenLossHLS}), and positive definiteness under the assumption that the total charge is zero is ensured by \cite[Theorem~1.16]{Landkof}. 

\begin{lemma}\label{lem:estimates_potential_energy}
Suppose that $\rho\in L^1(\R^2)$.
\begin{enumerate}[label= $\mathrm{(\alph*)}$]
    \item If $\rho$ is non-negative almost everywhere on $\R^2$, and the integral $$\int_{\R^2}\rho(x)\ln(1+|x|^2)\dx$$ is finite,
    then there exists a constant $C>0$ independent of $\rho$ such that
    \[-\int_{\R^2}\int_{\R^2}\ln|x-y|\rho(y)\rho(x)\dy\dx\le\frac12\|\rho\|_1\int_{\R^2}\rho\ln\frac{\rho}{\|\rho\|_1}\dx+C\|\rho\|_1^2.\]
    \item If $\int_{\R^2}\rho\dx=0$, it holds that
    \[\int_{\R^2}\int_{\R^2} \big(-\ln|x-y|\big) \rho(y)\rho(x)\dy\dx\ge0,\]
    provided that this integral exists.
\end{enumerate}
\end{lemma}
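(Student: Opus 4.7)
The plan for both parts is to invoke the two cited classical results directly; neither part requires substantial original computation, but each needs a small amount of bookkeeping.

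For part (a), the strategy is to reduce to the unit-mass case and then apply the logarithmic Hardy--Littlewood--Sobolev inequality of Carlen and Loss. Setting $M := \|\rho\|_1$ and $\tilde\rho := \rho/M$, one obtains a non-negative function on $\R^2$ with $\int_{\R^2}\tilde\rho\dx = 1$ and with $\int_{\R^2} \tilde\rho(x)\ln(1+|x|^2)\dx < \infty$; the latter moment bound is exactly what guarantees the Carlen--Loss inequality is applicable (and that both sides of our target inequality are well defined). Applying their theorem to $\tilde\rho$ yields a bound of the form
\[
-\int_{\R^2}\int_{\R^2}\ln|x-y|\,\tilde\rho(x)\tilde\rho(y)\dy\dx \;\le\; \tfrac12 \int_{\R^2}\tilde\rho\ln\tilde\rho\dx + C_0,
\]
with a universal constant $C_0$. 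Multiplying through by $M^2$ and substituting back $\tilde\rho = \rho/M$ reintroduces the factor $\frac12 M = \frac12\|\rho\|_1$ in front of the entropy term, turns $\ln\tilde\rho$ into $\ln(\rho/\|\rho\|_1)$, and gives the quadratic factor $\|\rho\|_1^2$ in front of $C_0$. The only thing to check is that the scaling is correct: this is a straightforward bookkeeping exercise, which is the main (but still minor) obstacle for part (a).

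For part (b), the claim is the classical positive definiteness of the logarithmic kernel on charge-neutral densities, which is precisely the content of Theorem~1.16 in Landkof. The conceptual reason is that, in the distributional sense on $\R^2$, the Fourier transform of $-\tfrac{1}{2\pi}\ln|x|$ equals $|\xi|^{-2}$ up to a singular distribution supported at the origin; the assumption $\int_{\R^2}\rho\dx = 0$ (equivalently, $\hat\rho(0) = 0$) kills the contribution of that singularity, allowing the double integral to be rewritten as a constant multiple of $\int_{\R^2}|\hat\rho(\xi)|^2 |\xi|^{-2}\d\xi$, which is manifestly non-negative. Since a rigorous justification of this Fourier-side manipulation under mere $L^1$ assumptions is delicate (it requires care near $\xi = 0$ and at spatial infinity), I would not reproduce it here and instead simply quote Landkof's theorem. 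The existence assumption on the double integral is exactly what guarantees the result is meaningful.

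In short, the proof would consist of two short paragraphs: the first performs the rescaling argument outlined above and invokes Carlen--Loss, while the second is a one-line reference to Landkof's positive-definiteness theorem. The only real task is matching constants and prefactors in part (a), so the theorem statement as given drops out of the Carlen--Loss inequality.
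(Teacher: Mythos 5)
Your proposal matches the paper's treatment: the lemma is stated there as a recollection of classical results, with part (a) obtained from the logarithmic Hardy--Littlewood--Sobolev inequality of Carlen and Loss after the same normalization $\tilde\rho=\rho/\|\rho\|_1$ (and the scaling you describe does reproduce the factors $\tfrac12\|\rho\|_1$ and $\|\rho\|_1^2$ correctly), and part (b) quoted directly from Landkof's positive-definiteness theorem. Since the paper offers no further argument beyond these citations, your write-up is essentially the same proof and is correct.
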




%
%

\section{Global classical well-posedness}\label{sec:global-well-posed}

We first want to establish the existence of a unique global classical solution $f=(f^+,f^-)$ with $f^\pm\in C^1([0,\infty[;\R^5)$ to the system \eqref{eq:WholeSystem}.

To this end,
we first recall that for any functions $F,G\in C_b([0,\infty[;C^1_b(\R^2))$,
$f$ is a classical solution to the equations
\begin{align}
\label{GEN}
 \delt f^\pm + p\cdot\delx f^\pm \pm (F+p\times G)\cdot\delp f^\pm =0 
 \quad\text{on $\R_0^+\times\R^5$}
\end{align}
if and only if the functions $f^\pm$ are constant along solutions of the characteristic systems
\begin{align}
\label{CHARSYS}
    \dot x = p,\quad
    \dot p = \pm \big( F(\cdot,x) + p \times G(\cdot,x) \big).
\end{align}
This equivalence is a direct consequence of \cite[Lemma~4]{knopf} and \cite[Lemma~5]{knopf} (which are stated in the three-dimensional setting but can easily be transferred to the two-and-one-half dimensional case).
We directly infer that any classical solution $f$ to \eqref{GEN} can be expressed as
\begin{align*}
    f^\pm(t,z) = \mathring f^\pm\big(Z^\pm(0,t,z)\big)
    \quad\text{for all $t\ge 0$, $z=(x,p)\in\R^5$,}
\end{align*}
where $Z^\pm(\cdot,t,z)=(X^\pm,P^\pm)(\cdot,t,z)$ denotes the unique solution of \eqref{CHARSYS} satisfying the condition $Z^\pm(t,t,z) = z$ for any given $z=(x,p)\in\R^5$. One can easily see that the characteristic flow is measure preserving, meaning that
\begin{align*}
    \det\left(\frac{\partial Z}{\partial z}\right) \equiv 1.
\end{align*}
For more details, we refer to \cite[Lemma~4]{knopf}. The above results directly imply that any classical solution $f=(f^+,f^-)$ to \eqref{GEN} satisfies
\begin{align*}
    \supp{f^\pm(t)} = Z^\pm(t,0,\supp{\mathring f^\pm})
    \quad\text{and}\quad
    \bignorm{f^\pm(t)}_{L^q(\R^5)} 
    = \bignorm{\mathring f^\pm}_{L^q(\R^5)}
\end{align*}
for all $q\in[1,\infty]$ and all $t\ge 0$ for which the solution exists (cf.~\cite[Lemma~5]{knopf}).

For any classical solution $f$ to the system \eqref{eq:WholeSystem}, we further define the quantities
\begin{align*}
    \mathbb P^\pm(t) &:= \sup \big\{ |P^\pm(s,0,z)| \;\big|\; z \in \supp{\mathring f^\pm},\; 0\le s\le t\,\big\}\\
    &\phantom{:}= \sup \big\{ \abs{p} \;\big|\; 
    \exists s\in[0,t]:\; (x,p) \in \supp{f^\pm(s)} \big\},\\[1ex]
    \mathbb X^\pm(t) &:= \sup \big\{ |X^\pm(s,0,z)| \;\big|\; z \in \supp{\mathring f^\pm},\; 0\le s\le t\,\big\}\\
    &\phantom{:}= \sup \big\{ \abs{x} \;\big|\; 
    \exists s\in[0,t]:\; (x,p) \in \supp{f^\pm(s)} \big\},\\[1ex]
    \mathbb P(t) &:= \underset{\pm}{\max}\; \mathbb P^\pm(t),
    \quad \mathbb X(t) := \underset{\pm}{\max}\; \mathbb X^\pm(t),
\end{align*}
for all $t\ge 0$ at which the solution exists. Here, $(X^\pm,P^\pm)$ denotes the solutions of the characteristic systems \eqref{CHARSYS} for the choices $G=B$ and $F=-\delx U$, where $U$ is the self-consistent electric potential satisfying \eqref{eq:Poisson}--\eqref{eq:Rho} and $B$ is given by \eqref{eq:Mag}.
We point out that the functions $\mathbb P^\pm$, $\mathbb X^\pm$, $\mathbb P$ and $\mathbb X$ are non-decreasing but we cannot a priori exclude the possibility that they become infinite already in finite time. However, to show that the classical solutions exist globally in time, it will be crucial to show that $\mathbb P(t)$ actually remains bounded on finite time intervals.

Furthermore, we consider the total energy
\begin{align*}
    \H\big(f(t)\big) 
    = E_\mathrm{kin}\big(f(t)\big) + E_\mathrm{pot}\big(f(t)\big),
\end{align*}
where the kinetic energy $E_\mathrm{kin}$ and the potential energy $E_\mathrm{pot}$ are defined as
\begin{align*}
    E_\mathrm{kin}\big(f(t)\big) &:= \frac 12 \sum_\pm \int_{\R^2}\int_{\R^3}\abs{p}^2 f^\pm(t,x,p) \dv\dx,\\
    E_\mathrm{pot}\big(f(t)\big) &:= \frac 12 \int_{\R^2} U(t,x) \rho(t,x) \dx. 
\end{align*}
It is straightforward to check that the total energy $\H\big(f(t)\big) $
is constant in time, meaning that $\H(f(t)) = \H(\mathring f)$, as long as the solution exists.

The global well-posedness result reads as follows:

\begin{theorem}
    \label{thm:globalWP}
    Let $\mathring f = (\mathring f^+,\mathring f^-) $ with $\mathring f^\pm \in C^1_c(\R^5;\R_0^+)$ be arbitrary initial data, and let $A\in C_b([0,\infty[;C^2_b(\R^2;\R^3))$ be any given external magnetic vector potential. 
    
    Then there exists a unique classical solution $f=(f^+,f^-)$ with ${f^\pm\in C^1([0,\infty[\times\R^5)}$ such that the supports $\supp f^\pm \subset \R^5$ remain compact for all $t\ge 0$. 
    
    Moreover, for any $\eps>0$, there exists a constant $c>0$ depending only on $\eps$, $\H\big(\mathring f\big)$, $\bignorm{\mathring f^\pm}_1$ and $\bignorm{\mathring f^\pm}_\infty$ such that
    \begin{align}
    \label{EST:P}
        \mathbb P(t) \le c(1+t)^{2+\eps}
        \quad\text{for all $t \ge 0$.}
    \end{align}
\end{theorem}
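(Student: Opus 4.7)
The plan is to follow the classical strategy for global existence of 3D Vlasov--Poisson (local existence via iteration, a continuation criterion, then an a priori momentum bound driven by energy conservation), adapted to the 2.5D setting with external magnetic field through the logarithmic potential of Lemma \ref{LEM:EP} and the indefinite potential energy handled by Lemma \ref{lem:estimates_potential_energy}. First I would construct a local solution on a maximal interval $[0,T^\ast[$ by an iteration scheme: starting with $f_0^\pm := \mathring f^\pm$ extended constantly in time, given a $C^1_c$ iterate $f_n^\pm$ with density $\rho_n$, define $U_n := U_{\rho_n} \in C^2(\R^2)$ via Lemma \ref{LEM:EP}, solve the characteristic system \eqref{CHARSYS} with $F = -\partial_x U_n$ and $G = B = \curl_x A$, and set $f_{n+1}^\pm$ as the pushforward of $\mathring f^\pm$ along that flow. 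Since $\nabla U_n$ and $D^2 U_n$ are bounded by \eqref{EST:DU}--\eqref{EST:D2U} on short time intervals and $A\in C^2_b$, a standard contraction argument in $C^1$ yields a unique local solution with compactly supported densities. A continuation criterion follows from a Gronwall-type argument on differences of characteristics: as long as $\mathbb P$ stays bounded, so does $\mathbb X(t)\le \mathbb X(0)+\int_0^t\mathbb P(s)\,ds$, which, combined with the conserved quantities $\norm{f^\pm}_1$ and $\norm{f^\pm}_\infty$ and the estimates \eqref{EST:DU}--\eqref{EST:D2U}, keeps $\nabla U$ and $D^2 U$ bounded and thus permits extension past any such $T$.

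The key a priori step is the momentum estimate \eqref{EST:P}. The crucial structural observation is that the magnetic force is orthogonal to the velocity, so along any characteristic
\begin{align*}
    \ddt \abs{P^\pm}^2
    = 2 P^\pm \cdot \dot P^\pm
    = \mp\, 2 P^\pm \cdot \nabla_x U(t, X^\pm),
\end{align*}
whence $\ddt \abs{P^\pm} \le \norm{\nabla_x U(t)}_\infty$ and thus $\mathbb P(t) \le \mathbb P(0) + \int_0^t \norm{\nabla_x U(s)}_\infty\,ds$. To control the right-hand side, I would use the standard split $\rho^\pm(t,x)\le C\norm{f^\pm}_\infty^{2/5}\big(\int_{\R^3} \abs{p}^2 f^\pm\dv\big)^{3/5}$ to obtain $\norm{\rho^\pm(t)}_{5/3} \le C\, E_{\mathrm{kin}}(f(t))^{3/5}$, bound $E_{\mathrm{kin}}$ via the conserved total energy $\mathcal H(f(t)) = \mathcal H(\mathring f)$, and apply \eqref{EST:DU} with $q = 5/3$ together with the trivial bound $\norm{\rho(t)}_\infty \le C\,\mathbb P(t)^3$, yielding $\norm{\nabla_x U(t)}_\infty \le C\,\mathbb P(t)^{1/2}$ up to a logarithmic prefactor. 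Setting $\phi(t) := C + C\int_0^t \mathbb P(s)^{1/2}\,ds$, the inequality $(\phi^{1/2})'(t) \le C(1+\ln(1+t))^{1/2}$ integrates to give $\mathbb P(t) \le C(1+t)^2(1+\ln(1+t))$, and the logarithmic factor is absorbed into any $(1+t)^\eps$.

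The main obstacle is bounding $E_{\mathrm{kin}}$ by the conserved total energy. In 2D with nonzero total charge $M\ne 0$, $\nabla U$ generally fails to be in $L^2(\R^2)$, so the familiar identity $E_{\mathrm{pot}} = \tfrac{1}{8\pi}\norm{\nabla U}_2^2 \ge 0$ breaks down and Lemma \ref{lem:estimates_potential_energy}(b) cannot be applied to $\rho$ directly. The remedy is to split $E_{\mathrm{pot}} = E_{++} + E_{--} - 2E_{+-}$ with $E_{\sigma\tau} := -\int_{\R^2}\int_{\R^2}\ln\abs{x-y}\,\rho^\sigma(y)\rho^\tau(x)\,dy\,dx$, apply Lemma \ref{lem:estimates_potential_energy}(a) to the species self-interactions, and control the logarithmic moment $\int_{\R^2} \ln(1+\abs{x}^2)\rho^\pm\dx \le C(1+\ln(1+\mathbb X(t)))$ using $\supp \rho^\pm(t)\subset\{\abs{x}\le \mathbb X(t)\}$. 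Because $\mathbb X$ is expressed through the time integral of $\mathbb P$, the resulting logarithmic coupling between $E_{\mathrm{kin}}$ and $\mathbb P$ is precisely what produces the arbitrary $\eps>0$ loss in \eqref{EST:P}. Uniqueness finally follows by the usual Gronwall argument on differences of characteristics, driven by the local boundedness of $\nabla U$ and $D^2 U$ via \eqref{EST:DU}--\eqref{EST:D2U}.
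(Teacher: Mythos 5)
Your overall route is the same as the paper's: local existence in the spirit of Kurth, Batt's continuation criterion reducing everything to a bound on $\mathbb P(t)$, the observation that the magnetic force does not change $|p|$ so that $\mathbb P(t)\le\mathbb P(0)+\int_0^t\norm{\delx U(s)}_\infty\ds$, the interpolation $\norm{\rho(t)}_{5/3}^{5/3}\le C E_{\mathrm{kin}}(f(t))$ combined with \eqref{EST:DU} and $\norm{\rho}_\infty\le C\,\mathbb P^3$, and a nonlinear Gronwall argument producing the exponent $2+\eps$. All of that matches.

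There is, however, a genuine gap in the one step you yourself single out as the main obstacle: the lower bound on $E_{\mathrm{pot}}$. With your notation $E_{\mathrm{pot}}=E_{++}+E_{--}-2E_{+-}$, what is needed is a \emph{lower} bound on the self-interaction terms $E_{\pm\pm}$ and an \emph{upper} bound on the cross term $E_{+-}$. Lemma~\ref{lem:estimates_potential_energy}(a) gives an \emph{upper} bound on $-\iint\ln|x-y|\,\varrho(y)\varrho(x)$ for a nonnegative $\varrho$, i.e.\ an upper bound on $E_{\pm\pm}$ — which is useless there (the needed lower bound on $E_{\pm\pm}$ is in fact trivial from the compact support, as in \eqref{EST:LN:2}). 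The term that actually requires the logarithmic HLS inequality is the cross term, more precisely the contribution $2\iint\ln_-|x-y|\,\rho^+(y)\rho^-(x)\dy\dx$, and Lemma~\ref{lem:estimates_potential_energy}(a) cannot be applied to it directly because it is bilinear in two different nonnegative densities. (A Cauchy--Schwarz reduction $E_{+-}\le\sqrt{E_{++}E_{--}}$ is not available either, since $-\ln|x-y|$ is only conditionally positive definite, cf.\ Lemma~\ref{lem:estimates_potential_energy}(b).) The paper resolves this by dominating $\rho^+(y)\rho^-(x)\le\ov\rho(y)\ov\rho(x)$ with $\ov\rho:=\max\{\rho^+,\rho^-\}$, rewriting $\iint\ln_-\ov\rho\,\ov\rho$ as $-\iint\ln\ov\rho\,\ov\rho$ plus a support-controlled $\ln_+$ term, and only then applying Lemma~\ref{lem:estimates_potential_energy}(a) to the single density $\ov\rho$; see \eqref{EST:LN:1}--\eqref{EST:LN:4}. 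As written, your proposal assigns the lemma to the wrong terms and leaves the problematic one unaddressed, so this step needs to be repaired (the $\ov\rho$ device is the missing ingredient). A second, minor point: the logarithmic factor in your bound for $E_{\mathrm{kin}}$ involves $\ln(1+\mathbb X)$ and $\ln(1+\norm{\rho^\pm}_\infty)$ and hence depends on $\mathbb P$ itself, not only on $t$; your Gronwall step $(\phi^{1/2})'\le C(1+\ln(1+t))^{1/2}$ glosses over this. The paper's device of absorbing the logarithm via $\ln x\le C_\delta x^{2\delta}$ before invoking the nonlinear Gronwall lemma handles this cleanly and is where the final exponent $\frac{2\delta+2}{1-2\delta}=2+\frac{6\delta}{1-2\delta}$ comes from.
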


\begin{proof}
    \textit{Step 1: Local well-posedness.} 
    For the standard three-dimensional Vlasov--Poisson system (i.e., without any external magnetic field), the existence and uniqueness of a local classical solution was first established by R.~Kurth~\cite{kurth}. 
    As the external magnetic field 
    ${B=\curl_x A \in C_b([0,\infty[;C^1_b(\R^2;\R^3))}$ 
    is sufficiently regular, and the electric potential can be bounded by means of the estimates \eqref{EST:DU} and \eqref{EST:D2U}, we can proceed completely analogously to Kurth's proof to establish the existence of a local classical solution. A sketch of the proof for the corresponding local existence result in three dimensions (with external magnetic field) can be found in \cite[Theorem~1]{knopf}.
    In this way, we obtain a unique local classical solution $f=(f^+,f^-)$ to the system \eqref{eq:WholeSystem} existing on a right-maximal time interval $[0,T_*[$. 
    
    \textit{Step 2: Continuation criterion.}
    For the local classical solution of the standard three-dimensional Vlasov--Poisson system (i.e., $B\equiv 0$), a continuation criterion was established by J.~Batt~\cite{batt}. It states that the local solution can be extended in time as long as the velocity support $\mathbb P(t)$ remains under control.
    Proceeding analogously, we conclude that for the maximal local classical solution $f=(f^+,f^-)$ constructed in Step~1, it either holds that
    \begin{align}
    \label{CONTCRIT}
        \underset{\substack{t\to T_* \\ 0\le t <T_*}}{\lim} \mathbb P(t) = \infty
    \end{align}
    or $T_*=\infty$. Hence, in order to show that our unique maximal solution actually exists globally in time, we must ensure that $\mathbb P(t)$ remains bounded on finite time intervals as long as the solution exists. 
    
    \textit{Step 3: Extension on $[0,\infty[$.}
    Based on Batt's continuation criterion, two different proofs for global existence of classical solutions
    for the standard three-dimensional Vlasov--Poisson system (i.e., $B\equiv 0$) were virtually obtained simultaneously: one of them by P.L.~Lions and B.~Perthame~\cite{lions-perthame}, and the other one by K.~Pfaffelmoser~\cite{pfaffelmoser}. Pfaffelmoser's proof was later greatly simplified by J.~Schaeffer~\cite{schaeffer}. 
    We point out that the Pfaffelmoser--Schaeffer proof was further generalized in \cite[Theorem~1]{knopf} to handle the three-dimensional Vlasov--Poisson system endowed with an external magnetic field. 
    
    Although our very basic strategy is the same, we will see that for or our two-and-one-half dimensional system, the proof of global existence differs greatly from the three-dimensional case. Fortunately, we have much better estimates for the electric field $-\delx U$ (see Lemma~\ref{LEM:EP}) than it would be the case in three dimensions. This means that the main difficulties of the proofs by Lions and Perthame or Pfaffelmoser and Schaeffer do not occur in our situation. Therefore,  our proof will be much more straightforward. 
    However, we will face different problems as the two-dimenional
    electric potential is given by a logarithmic convolution kernel and thus, it is non-trivial to establish that the potential energy is bounded from below.
    
    We argue by contradiction and assume that $T_*<\infty$. Hence, the goal is to show that $\mathbb P(t)$ remains bounded for all $t\in[0,T_*[$. This would be a contradiction to \eqref{CONTCRIT} and we thus conclude that actually $T_*=\infty$.
    To this end, let $t\in[0,T_*[$ be arbitrary. In the following, the letter $C$ will denote generic positive constants that depend only $\bignorm{\mathring f^\pm}_1$ and $\bignorm{\mathring f^\pm}_\infty$, and may change their value from line to line.
    
    It is crucial that an external magnetic field does not change the modulus of a particle's velocity, but only its direction. For any $s\in [0,t]$, we thus obtain the estimate
    \begin{align*}
        \frac12|P^\pm(s)|^2 
        \le \frac12|P^\pm(0)|^2 
        + \int_0^s \abs{\delx U\big(\tau,X(\tau)\big)}\, 
        \abs{P^\pm(\tau)} \dtau 
    \end{align*}
    for all $s\in[0,t]$. Invoking a quadratic version of Gronwall's lemma (see \cite[Theorem~5]{Dragomir}), and taking the maximum of the components $\pm$, we conclude that
    \begin{align}
    \label{EST:PS}
        \mathbb P(s) 
        \le \mathbb P(0)  
        + \int_0^s \norm{\delx U\big(\tau\big)}_{\infty}\, 
        \dtau 
    \end{align}
    for all $s\in[0,t]$.
    Hence, the next goal is to establish a suitable bound for the $L^\infty(\R^2)$-norm of $\delx U(\tau)$. Therefore, we recall that the total energy $\H\big(f(\tau)\big) $
    is conserved, meaning that $\H(f(\tau)) = \H(\mathring f)$ for all $\tau\in [0,T_*[$.
    Using the convolution representation of $U$ (cf.~\eqref{DEF:URHO}), we obtain 
    \begin{align}
    \label{EST:KIN}
        E_\mathrm{kin}\big(f(\tau)\big)
        = \H(\mathring f) 
        + \int_{\R^2} \int_{\R^2} \ln|x-y| \, \rho(\tau,y) \, \rho(\tau,x) \dy\dx 
    \end{align} 
    for all $\tau\in[0,T_*[$. Unfortunately, as the total charge is generally not zero, we cannot directly use Lemma~\ref{lem:estimates_potential_energy} to bound second summand on the right-hand side, as it has the ``wrong'' sign. We thus expand this term as follows:
    \begin{align*}
        &\int_{\R^2} \int_{\R^2} 
        \ln|x-y| \, \rho(\tau,y) \, \rho(\tau,x) \dy\dx 
        \\[1ex]&\quad
        = \int_{\R^2} \int_{\R^2} 
        \ln|x-y| \, \rho^+(\tau,y) \, \rho^+(\tau,x) \dy\dx 
        \\ &\quad\qquad
        + \int_{\R^2} \int_{\R^2} 
        \ln|x-y| \, \rho^-(\tau,y) \, \rho^-(\tau,x) \dy\dx 
        \\ &\quad\qquad
        - 2 \int_{\R^2} \int_{\R^2} 
        \ln|x-y| \, \rho^+(\tau,y) \, \rho^-(\tau,x) \dy\dx 
    \end{align*} 
    for all $\tau\in[0,T_*[$. Let now $\ln_+\ge 0$ and $\ln_-\ge 0$ denote the positive and the negative part of the logarithm $\ln$, meaning that $\ln = \ln_+ - \ln_-\,$. Moreover, we define the function $\ov\rho:=\max\{\rho^+,\rho^-\}$. We thus obtain the estimate
    \begin{align}
    \label{EST:LN:1}
        &\int_{\R^2} \int_{\R^2} 
        \ln|x-y| \, \rho(\tau,y) \, \rho(\tau,x) \dy\dx 
        \notag\\[1ex]&\quad
        \le \int_{\R^2} \int_{\R^2} 
        \ln_+|x-y| \, \rho^+(\tau,y) \, \rho^+(\tau,x) \dy\dx 
        \notag\\ &\quad\qquad
        + \int_{\R^2} \int_{\R^2} 
        \ln_+|x-y| \, \rho^-(\tau,y) \, \rho^-(\tau,x) \dy\dx 
        \notag\\ &\quad\qquad
        + 2 \int_{\R^2} \int_{\R^2} 
        \ln_-|x-y| \, \ov\rho(\tau,y) \, \ov\rho(\tau,x) \dy\dx 
    \end{align} 
    for all $\tau\in[0,T_*[$. We next recall that $\rho^\pm(\tau,x) = 0$ for all $x\in\R^2$ with $\abs{x}>\mathbb X(\tau)$. Hence, the first two summands on the right-hand side can be bounded in the following way:
    \begin{align}
    \label{EST:LN:2}
        &\int_{\R^2} \int_{\R^2} 
        \ln_+|x-y| \, \rho^\pm(\tau,y) \, \rho^\pm(\tau,x) \dy\dx
        \notag\\[1ex] &\quad
        \le \ln\big(1+2\mathbb X(\tau)\big) \norm{\rho^\pm(\tau)}_{1}^2
        =\ln\big(1+2\mathbb X(\tau)\big) \bignorm{\mathring f^\pm}_{1}^2
    \end{align}
    for all $\tau\in[0,T_*[$. To bound the third summand on the right-hand side of \eqref{EST:LN:1}, we recall that $\ln_-|x-y| = -\ln|x-y|$ if $|x-y|<1$, and $\ln_-|x-y| = 0$ if $|x-y|\ge 1$.
    \begin{align}
    \label{EST:LN:3}
        &\int_{\R^2} \int_{\R^2} 
        \ln_-|x-y| \, \ov\rho(\tau,y) \, \ov\rho(\tau,x) \dy\dx 
        \notag\\[1ex] &\quad
        = - \iint\limits_{\{|x-y|<1\}} 
        \ln|x-y| \, \ov\rho(\tau,y) \, \ov\rho(\tau,x) \dy\dx 
        \notag\\[1ex] &\quad
        = - \int_{\R^2} \int_{\R^2}
        \ln|x-y| \, \ov\rho(\tau,y) \, \ov\rho(\tau,x) \dy\dx 
        \notag\\ &\quad\qquad
        + \iint\limits_{\{|x-y|\ge 1\}} 
        \ln|x-y| \, \ov\rho(\tau,y) \, \ov\rho(\tau,x) \dy\dx 
        \notag\\[1ex] &\quad
        \le - \int_{\R^2} \int_{\R^2}
        \ln|x-y| \, \ov\rho(\tau,y) \, \ov\rho(\tau,x) \dy\dx 
        \notag\\ &\quad\qquad
        + \int_{\R^2} \int_{\R^2}
        \ln_+|x-y| \, \ov\rho(\tau,y) \, \ov\rho(\tau,x) \dy\dx 
    \end{align}
    for all $\tau\in[0,T_*[$.
    Obviously, the second integral in the last line can be bounded by proceeding as in \eqref{EST:LN:2}. The first integral in the last line of \eqref{EST:LN:3} now has the ``correct'' sign such that Lemma~\ref{lem:estimates_potential_energy}(a) can be applied. Without loss of generality, we assume that $\mathring f^+ \ge 0$ is nontrivial, and we thus have 
    \begin{align*}
        0 < \bignorm{\mathring f^+}_1
        = \bignorm{\rho^+(\tau)}_1 
        \le \bignorm{\ov\rho(\tau)}_{1}
        \le \bignorm{\rho^+(\tau)}_{1}
            + \bignorm{\rho^-(\tau)}_{1}
        \le \bignorm{\mathring f^+}_{1}
            + \bignorm{\mathring f^-}_{1}
        \le C.
    \end{align*}
    By means of Lemma~\ref{lem:estimates_potential_energy}, we now obtain 
    \begin{align}
    \label{EST:LN:4}
        &- \int_{\R^2} \int_{\R^2}
        \ln|x-y| \, \ov\rho(\tau,y) \, \ov\rho(\tau,x) \dy\dx
        \notag\\[1ex] &\quad
        \le \frac12 \norm{\ov\rho(\tau)}_1 \int_{\R^2} \ov\rho(\tau,x)\; 
            \ln\left(
            \frac{\ov\rho(\tau,x)}{\norm{\ov\rho(\tau)}_1}
            \right) \dx + C\norm{\ov\rho(\tau)}_1^2
        \notag\\[1ex] &\quad
        \le C + C \ln\big(1 + \norm{\rho^+(\tau)}_\infty + \norm{\rho^-(\tau)}_\infty\big)
    \end{align}
    for all $\tau\in[0,T_*[$.
    Combining the estimates \eqref{EST:LN:1}--\eqref{EST:LN:4}, we can estimate the right-hand side of \eqref{EST:KIN}. This gives
    \begin{align}
    \label{EST:KIN:1}
        E_\mathrm{kin}\big(f(\tau)\big)
        \le C + C \ln\big(1+2\mathbb X(\tau)\big) 
            + C \ln\big(1+ \norm{\rho^+(\tau)}_\infty + \norm{\rho^-(\tau)}_\infty\big)
    \end{align} 
    for all $\tau\in[0,T_*[$.
    Since any solutions $(X^\pm,P^\pm)$ of the characteristic systems \eqref{CHARSYS} (written for $F=-\delx U$ and $G=B$) satisfy $\dot X^\pm = P^\pm$, it is straightforward to check that
    \begin{align*}
        \mathbb X(\tau) 
        \le \mathbb X(0) + \int_0^\tau \mathbb P(s) \ds
        \le C + \tau \mathbb P(\tau)
        \quad\text{for all $\tau\in[0,T_*[$.}
    \end{align*}
    Furthermore, we have
    \begin{align}
    \label{EST:RHO:INF}
        \norm{\rho^\pm(\tau)}_{L^\infty(\R^2)} 
        \le \frac{4\pi}{3} \bignorm{\mathring f^\pm}_\infty \mathbb P(\tau)^3
        \le C \mathbb P(\tau)^3
        \quad\text{for all $\tau\in[0,T_*[$.}
    \end{align}
    Let now $0<\delta<\frac 12$ be arbitrary. In the following, let $C_\delta$ denote generic positive constants that depend only on $\delta$, $\bignorm{\mathring f^\pm}_1$ and $\bignorm{\mathring f^\pm}_\infty$, and may change their value from line to line.
    In combination with \eqref{EST:KIN:1}, we conclude that
    \begin{align}
        \label{EST:KIN:2}
        E_\mathrm{kin}\big(f(\tau)\big)
        &\le C + C \ln\big(C(1+\tau)\big(1+\mathbb P(\tau)\big)\big) 
            + C \ln\big( C \big(1+\mathbb P(\tau)\big)^3 \big)
        \notag\\ &\quad
        \le C_\delta (1+\tau)^{2\delta} \big(1+\mathbb P(\tau)\big)^{2\delta}
    \end{align}
    for all $\tau\in[0,T_*[$.
    By means of an interpolation argument, we further obtain the estimate
    \begin{align*}
        \rho^\pm(\tau,x) 
        \le C \left( \int_{\R^3}\abs{p}^2 f^\pm(\tau,x,p) \dv \right)^{\frac 35}
    \end{align*}
    for all $\tau\in [0,T_*[$ and $x\in \R^2$.
    Along with \eqref{EST:KIN:2}, this leads to the estimate
    \begin{align}
    \label{EST:RHO:53}
        \norm{\rho(\tau)}_{L^{5/3}(\R^2)}^{5/3}
        &= \int_{\R^2} |\rho(\tau,x)|^{5/3} \dx
        \le C E_\mathrm{kin}\big(f(\tau)\big)
        \notag \\
        &\le C_\delta (1+\tau)^{2\delta} 
            \big(1+\mathbb P(\tau)\big)^{2\delta}
    \end{align}
    for all $\tau\in [0,T_*[$. Invoking the estimate \eqref{EST:DU} from Lemma~\ref{LEM:EP}, as well as the estimates \eqref{EST:RHO:INF} and \eqref{EST:RHO:53}, we thus get
    \begin{align*}
        \norm{\delx U(\tau)}_{L^\infty(\R^2)} 
        &\le C \norm{\rho(\tau)}_{L^{5/3}(\R^2)}^{5/6}
            \norm{\rho(\tau)}_{L^{\infty}(\R^2)}^{1/6} 
        \notag\\
        &\le C_\delta (1+\tau)^\delta \big( 1 + \mathbb P(\tau) \big)^{\delta+1/2}
    \end{align*}
    for all $\tau\in [0,T_*[$. Using this estimate to bound the right-hand side of \eqref{EST:PS}, we obtain 
        \begin{align*}
        \big(1 + \mathbb P(s)\big)
        \le \big(1 + \mathbb P(0)\big) 
        + \int_0^s C_\delta (1+\tau)^\delta\,\big( 1 + \mathbb P(\tau)\big)^{\delta + 1/2} \dtau 
    \end{align*}
    for all $s\in [0,t]$.
    Applying a nonlinear generalization of Gronwall's lemma (see \cite[Theorem~21]{Dragomir} with $\alpha=\frac 12 + \delta$), and recalling that $\delta < \frac 12$, we eventually conclude that
    \begin{align*}
        1 + \mathbb P(t) 
        \le \left[ 
        (1+\mathbb P(0))^{\frac 12 - \delta} 
        + C_\delta \big(\tfrac 12 - \delta\big) \int_0^t (1+s)^\delta \ds
        \right]^{\frac{2}{1-2\delta}}.
    \end{align*}
    Since $t\in[0,T_*[$ was arbitrary, this yields
    \begin{align}
    \label{EST:P:P4}
        \mathbb P(t) 
        \le C_\delta(1+t)^{\frac{2\delta + 2}{1-2\delta}} 
        = C_\delta(1+t)^{2+\frac{6\delta}{1-2\delta}} 
        \quad\text{for all $t\in [0,T_*[$.}
    \end{align}
    This contradicts \eqref{CONTCRIT} and thus, it actually holds that $T_*=\infty$. 
    For any arbitrary $\eps>0$, we choose
    \begin{align*}
        \delta := \frac{\eps}{6+2\eps},
        \quad\text{i.e.,}\quad
        \eps = \frac{6\delta}{1-2\delta},
    \end{align*}
    and thus, the estimate \eqref{EST:P} follows directly from \eqref{EST:P:P4}.
    Hence, all assertions are established and the proof is complete. 
\end{proof}

\medskip

In addition, we can show that the classical solution depends continuously on the external magnetic field, respectively. This is established by the following theorem.

\begin{theorem}
\label{THM:ContDep}
    Let $\mathring f = (\mathring f^+,\mathring f^-) $ with $\mathring f^\pm \in C^1_c(\R^5;\R_0^+)$ be any initial datum, let furthermore $A^1,A^2\in C_b([0,\infty[;C^2_b(\R^2;\R^3))$ be any given external magnetic vector potentials, and let $B^1$ and $B^2$ denote the corresponding magnetic fields.   
    Moreover, for $i\in\{1,2\}$, let $f_i=(f_i^+,f_i^-)$ denote the global-in-time classical solution of the system \eqref{eq:WholeSystem} to the initial data $f_0$ corresponding to the magnetic vector potential $A^i$.
    
    Then, for any $\gamma>4$, there exists a real number $q\in(2,\infty)$ depending only on $\gamma$ as well as a constant $c>0$ depending only on $\H\big(\mathring f\big)$, $\bignorm{\mathring f^\pm}_1$, $\bignorm{\mathring f^\pm}_\infty$ and $\gamma$
    such that for all $t\ge 0$,
    \begin{align}
    \label{EST:CD}
        \sum_\pm\norm{f_1^\pm(t)-f_2^\pm(t)}_2
        &\le a_{f_2}(t) 
        \exp\big(b_{f_2}(t)\, c(1+t)^\gamma\big)
        \norm{B^1-B^2}_{L^1(0,t;L^2(\R^2))}
    \end{align}
    where 
    \begin{align*}
        a_{f_2}(t) &:= 2\underset{\pm}{\max}\; \underset{s\in[0,t]}{\max} 
        \norm{p\times \delp f_2^\pm(s)}_{L^\infty(\R^2;L^2(\R^3))},\\ 
        b_{f_2}(t) &:= 2\underset{\pm}{\max}\; \underset{s\in[0,t]}{\max} 
        \norm{\delp f_2^\pm(s)}_{L^q(\R^2;L^2(\R^3))}.
    \end{align*}
\end{theorem}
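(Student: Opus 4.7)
The plan is to derive an energy estimate for the difference $g^\pm:=f_1^\pm-f_2^\pm$ in $L^2(\R^2\times\R^3)$ and close it with Gronwall's lemma. Setting $W:=U_1-U_2$ and subtracting the two Vlasov equations yields the forced transport equation
\begin{align*}
\delt g^\pm+p\cdot\delx g^\pm\pm(-\delx U_1+p\times B^1)\cdot\delp g^\pm=\mp\bigl(-\delx W+p\times(B^1-B^2)\bigr)\cdot\delp f_2^\pm
\end{align*}
with vanishing initial datum. Testing with $g^\pm$ and integrating over $\R^2\times\R^3$, the transport term in $x$ disappears after integration by parts and the self-advection in $p$ vanishes because $-\delx U_1+p\times B^1$ is divergence-free in $p$. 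Together with $|(p\times(B^1-B^2))\cdot\delp f_2^\pm|\le|B^1-B^2|\,|p\times\delp f_2^\pm|$, this yields
\begin{align*}
\tfrac12\ddt\|g^\pm(t)\|_2^2\le\int_{\R^2}\!\!\int_{\R^3}|g^\pm||\delx W||\delp f_2^\pm|\dv\dx+\int_{\R^2}\!\!\int_{\R^3}|g^\pm||B^1-B^2||p\times\delp f_2^\pm|\dv\dx.
\end{align*}

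For the magnetic source, I would apply Cauchy--Schwarz in $p$ followed by Hölder in $x$ with exponents $(2,2,\infty)$ to get the bound $\tfrac12 a_{f_2}(t)\|B^1(t)-B^2(t)\|_{L^2(\R^2)}\|g^\pm(t)\|_2$. For the electric source, Cauchy--Schwarz in $p$ combined with a triple Hölder in $x$ with exponents $(2,r_W,q)$ where $\tfrac1{r_W}=\tfrac12-\tfrac1q$ produces $\|g^\pm(t)\|_2\,\|\delx W(t)\|_{L^{r_W}(\R^2)}\,\tfrac12 b_{f_2}(t)$. The crucial step is then a linear-in-$\|g\|_2$ estimate for $\|\delx W\|_{L^{r_W}}$. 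Since $-\Delta W=4\pi(\rho_g^+-\rho_g^-)$ with $\rho_g^\pm:=\int g^\pm\dv$, the Hardy--Littlewood--Sobolev inequality for the Riesz potential of order one in $\R^2$ yields $\|\delx W\|_{L^{r_W}}\le C\sum_\pm\|\rho_g^\pm\|_{L^{q'}(\R^2)}$ with conjugate exponent $q'=q/(q-1)\in(1,2)$. By Theorem~\ref{thm:globalWP}, both $f_1^\pm$ and $f_2^\pm$ have compact support, hence so does $g^\pm$, with $x$-support in $\overline{B_{\mathbb X(t)}}$ and $p$-support in $\overline{B_{\mathbb P(t)}}$, where $\mathbb X,\mathbb P$ denote the suprema of the analogous quantities for $f_1$ and $f_2$. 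Two successive Hölder steps, first in $p$ and then in $x$ (where $q'<2$ makes the latter possible), give
\begin{align*}
\|\rho_g^\pm(t)\|_{L^{q'}(\R^2)}\le C\,\mathbb P(t)^{3/2}\,\mathbb X(t)^{(q-2)/q}\,\|g^\pm(t)\|_2,
\end{align*}
and the bounds $\mathbb P(t)\le C(1+t)^{2+\eps}$ from \eqref{EST:P} together with $\mathbb X(t)\le C(1+t)(1+\mathbb P(t))$ collapse the whole prefactor to $C_{\eps,q}(1+t)^\sigma$ with $\sigma=6+\tfrac{5\eps}2-\tfrac{6+2\eps}{q}$.

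Summing both estimates over $\pm$ and writing $G(t):=\|g^+(t)\|_2+\|g^-(t)\|_2$, the energy inequality becomes
\begin{align*}
\ddt G(t)\le C_{\eps,q}(1+t)^\sigma b_{f_2}(t)\,G(t)+a_{f_2}(t)\,\|B^1(t)-B^2(t)\|_{L^2(\R^2)}.
\end{align*}
Since $G(0)=0$ and $a_{f_2},b_{f_2}$ are non-decreasing in $t$, Gronwall's lemma combined with $\int_0^t(1+s)^\sigma\ds\le(1+t)^{\sigma+1}$ produces precisely the estimate \eqref{EST:CD} with $\gamma$ replaced by $\sigma+1$. Given $\gamma>4$, one then chooses $q>2$ and $\eps>0$ small enough so that $\sigma+1<\gamma$: as $\sigma\to 3$ when $q\searrow 2$ and $\eps\searrow 0$, the requirement reduces to $6/q>7-\gamma$, which admits a solution $q\in(2,\infty)$ exactly when $\gamma>4$. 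The main obstacle I anticipate is the delicate balancing act inside the electric estimate: one must keep $\|\delx W\|_{L^{r_W}}$ linear in $\|g\|_2$ so that Gronwall applies, while controlling the polynomial blow-up in $t$ arising from the two support scales $\mathbb P(t),\mathbb X(t)$ sharply enough to match the stated threshold $\gamma>4$; in particular, one cannot afford to invoke Lemma~\ref{LEM:EP} with a nontrivial power of $\|\rho_g\|_\infty$, which would break the required linearity in $\|g\|_2$.
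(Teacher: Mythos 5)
Your proposal is correct and follows essentially the same route as the paper: the same $L^2$ energy identity for the difference, the same Hölder splitting $(2,r,q)$ with $\tfrac1r+\tfrac1q=\tfrac12$, the Hardy--Littlewood--Sobolev bound $\|\delx W\|_{L^r}\le C\|\rho_{\bar f}\|_{L^{q'}}$, the support bounds $\mathbb P(t)\lesssim(1+t)^{2+\eps}$, $\mathbb X(t)\lesssim(1+t)^{3+\eps}$ from Theorem~\ref{thm:globalWP}, and a final Gronwall step; your exponent $\sigma+1=7+\tfrac{5\eps}{2}-\tfrac{6+2\eps}{q}$ coincides exactly with the paper's $\tfrac{6+2\eps}{r}+\tfrac{3\eps}{2}+4$. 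The only cosmetic difference is that the paper avoids differentiating $\|\bar f^\pm\|_2$ directly by first integrating in time and invoking a quadratic version of Gronwall's lemma, which cleanly handles the points where the norm vanishes.
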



\begin{proof}
For brevity, we write
\begin{gather*}
    \bar f^\pm\coloneqq f_1^\pm-f_2^\pm,
    \quad \bar U\coloneqq U_1-U_2,
    \quad \bar B\coloneqq B^1-B^2,\\
    \bar\rho^\pm:=\rho_1^\pm-\rho_2^\pm,
    \quad \bar\rho:=\rho_1-\rho_2=\bar\rho^{\,+}-\bar\rho^{\,-}.
\end{gather*}
Subtracting the two equations
\begin{align*}
    \delt f_1^\pm + p\cdot\delx f_1^\pm \pm (-\delx U_1+p\times B^1)\cdot\delp f_1^\pm &=0,\\
    \delt f_2^\pm + p\cdot\delx f_2^\pm \pm (-\delx U_2+p\times B^2)\cdot\delp f_2^\pm &=0,
\end{align*}
we obtain
\begin{align}\label{eq:bar_f}
    \delt \bar f^\pm + p\cdot\delx \bar f^\pm \pm (-\delx U_1+p\times B^1)\cdot\delp \bar f^\pm = \mp (-\delx \bar U+p\times \bar B)\cdot\delp f_2^\pm.
\end{align}
To estimate the $L^2$-norm of $\bar f^\pm$, we multiply \eqref{eq:bar_f} by $\bar f^\pm$ and integrate with respect to $(x,p)$. This yields
\begin{align*}
    \frac12\frac{\d}{\d t}\norm{\bar f^\pm(t)}_2^2&=-\frac12\int_{\R^2}\int_{\R^3}\Big(\divg_x\big(p(\bar f^\pm)^2\big)\pm\divg_p\big((-\delx U_1+p\times B^1)(\bar f^\pm)^2\big)\Big)\dv\dx\\
    &\phantom{=\;}\mp\int_{\R^2}\int_{\R^3}\bar f^\pm(-\delx \bar U+p\times \bar B)\cdot\delp f_2^\pm\dv\dx\\
    &=\mp\int_{\R^2}\int_{\R^3}\bar f^\pm(-\delx \bar U+p\times \bar B)\cdot\delp f_2^\pm\dv\dx
\end{align*}
for all $t\ge 0$. 
In the following, let $\gamma>4$ be arbitrary, and let $c>0$ denote a generic constant that may depend on $\H\big(\mathring f\big)$, $\bignorm{\mathring f^\pm}_1$, $\bignorm{\mathring f^\pm}_\infty$ and $\gamma$ and may change its value from line to line. 
We further choose any real numbers $r\in(2,\infty)$, $q\in (1,2)$ and $\eps>0$ (all depending on $\gamma$) such that
\begin{align}
\label{RQE}
   \frac 1r + \frac 1q = \frac 12
   \quad\text{and}\quad
   \frac{6+2\eps}r+\frac{3\eps}2 + 4 < \gamma.
\end{align}
Since $\bar f^\pm(0)=0$, we have
\begin{align*}
    \frac12\norm{\bar f^\pm(t)}_2^2&=\mp\int_0^t\int_{\R^2}\int_{\R^3}\bar f^\pm(-\delx \bar U+p\times \bar B)\cdot\delp f_2^\pm\dv\dx\ds\\
    &\le \int_0^t \frac12 \big(b_{f_2}(t) \norm{\delx\bar U(s)}_r 
        + a_{f_2}(t) \norm{\bar B(s)}_2\big)\norm{\bar f^\pm(s)}_2\ds
\end{align*}
for all $t\ge 0$. By the quadratic Gronwall lemma (see \cite[Theorem~5]{Dragomir}), it follows that
\begin{align}\label{eq:barf_est_first}
    \norm{\bar f^\pm(t)}_2\le \frac12 b_{f_2}(t) \int_0^t \norm{\delx\bar U(s)}_r \ds
    + \frac12 a_{f_2}(t) \int_0^t \norm{\bar B(s)}_2\ds.
\end{align}
We now have a closer look at the term $\norm{\delx\bar U(t)}_r$. Using the Hardy--Littlewood--Sobolev inequality, we obtain the estimate
\begin{align}\label{eq:barHLS}
    \norm{\delx\bar U(t)}_r \le c\norm{\bar\rho(t)}_{r_*}
    \quad\text{for all $t\ge 0$},
\end{align}
where $r_*\in (1,2)$ is chosen such that $\frac1{r_*}=\frac12+\frac1r$. For any $t\ge 0$, we now write
\begin{align*}
    \bar {\mathbb P}^\pm(t) &:= \sup \big\{ \abs{p} \;\big|\; 
    \exists s\in[0,t]:\; (x,p) \in \supp{\bar f^\pm(s)} \big\},\\[1ex]
    \bar {\mathbb X}^\pm(t) &:= \sup \big\{ \abs{x} \;\big|\; 
    \exists s\in[0,t]:\; (x,p) \in \supp{\bar f^\pm(s)} \big\},\\[1ex]
    \bar {\mathbb P}(t) &:= \underset{\pm}{\max}\; \bar {\mathbb P}^\pm(t),
    \quad \bar {\mathbb X}(t) := \underset{\pm}{\max}\; \bar {\mathbb X}^\pm(t).
\end{align*}
As a consequence of Theorem \ref{thm:globalWP}, it holds that
\begin{align*}
    \bar{\mathbb P}(t)\le c (1+t)^{2+\eps}
    \quad\text{and}\quad
    \bar{\mathbb X}(t)\le c (1+t)^{3+\eps}
\end{align*}
for all $t\ge 0$.
Since
\[
    \abs{\bar\rho(t,x)}=\abs{\int_{\R^3}(\bar f^+-\bar f^-)(t,x,p)\dv}\le \sqrt{4\pi/3}\; 
    \bar{\mathbb P}(t)^{3/2} \left(\int_{\R^3}\abs{(\bar f^+-\bar f^-)(t,x,p)}^2\dv\right)^{1/2},
\]
we thus infer that
\begin{align*}
    \norm{\bar\rho(t)}_{r_*}&\le\pi^{1/r}\bar{\mathbb X}(t)^{2/r}\norm{\bar\rho(t)}_2
    \le \sqrt{8 \pi/3} \;\pi^{1/r}\,\bar{\mathbb X}(t)^{2/r}\bar{\mathbb P}(t)^{3/2}\sum_\pm\norm{\bar f^\pm(t)}_2\\
    &\le c (1+t)^{\frac{6+2\eps}r+\frac{3\eps}2+3}\sum_\pm\norm{\bar f^\pm(t)}_2
\end{align*}
for all $t\ge 0$.
In combination with \eqref{eq:barHLS} and \eqref{eq:barf_est_first}, we get
\[\sum_\pm\norm{\bar f^\pm(t)}_2\le a_{f_2}(t) \norm{\bar B}_{L^1(0,t;L^2(\R^2))}+ b_{f_2}(t)\, c \int_0^t (1+s)^{\frac{6+2\eps}r+\frac{3\eps}2+3}\sum_\pm\norm{\bar f^\pm(s)}_2\ds,\]
and thus, Gronwall's lemma implies that
\[
    \sum_\pm\norm{\bar f^\pm(t)}_2\le a_{f_2}(t)\norm{\bar B}_{L^1(0,t;L^2(\R^2))}
    \exp\left(b_{f_2}(t)\, c(1+t)^{\frac{6+2\eps}r+\frac{3\eps}2+4}\right)
\]
for all $t\ge 0$. Due to \eqref{RQE}, this proves \eqref{EST:CD} and thus, the proof is complete. 
\end{proof}


%
%

\section{Existence of confined steady states}\label{sec:Existence}
 
We now show the existence of a confined steady state of the system \eqref{eq:WholeSystem} under the influence of an external magnetic field $B=\curl_x A$. Such steady states are defined as follows:

\begin{definition}[Steady states] \label{DEF:SS}
    A pair of functions $f_0=(f_0^+,f_0^-)$ is called a \emph{steady state} of the Vlasov--Poisson system \eqref{eq:WholeSystem} if $f_0^\pm$, the induced electric potential $U_0$, the external vector potential $A^0$ and its induced magnetic field $B^0$ have the regularity
    \begin{align*}
        f_0^\pm\in C^1(\R^5;\R_0^+), \quad
        U_0\in C^2(\R^2), \quad
        A^0\in C^2(\R^2;\R^3), \quad
        B^0 \in C^1(\R^2;\R^3),
    \end{align*}
    and satisfy the following equations:
\begin{subequations}
	\begin{alignat}{2}
	&p\cdot\delx f_0^\pm \pm(-\delx U_0+p\times B^0)\cdot\delp f_0^\pm =0 &&\quad\mathrm{on}\ \R^2\times\R^3,
	\\[1ex]
	&-\Delta_x U_0=4\pi\rho_0
	&&\quad\mathrm{on}\ \R^2,\label{eq:Poisson:SS}
	\\[2ex]
	& \lim_{|x|\to\infty}(U_0(x)+2M\ln|x|)=0, 
	\label{eq:PoissonBC:SS} \\[1ex]
	&\rho_0^\pm = \int_{\R^3} f_0^\pm(\cdot,\cdot,p) \, \mathrm dp, 
	    \quad \rho_0 = \rho_0^+ - \rho_0^-
	&&\quad\mathrm{on}\ \R^2,
	\\[1ex]
	&\curl_x A^0 = B^0
	&&\quad\mathrm{on}\ \R^2,
	\end{alignat}
\end{subequations}
where $M=\int_{\R^2}\rho_0\,dx$. A steady state $f_0^\pm$ is called \emph{confined in a cylinder with radius $R>0$} if $f_0^\pm(x,p)=0$ for all $\abs{x}>R$, and it is called \emph{nontrivial} if both $f_0^\pm$ are not identically zero.
\end{definition}

We are looking for steady states which are axially symmetric about the $z$-axis, meaning that
\begin{align*}
    f_0^\pm(x,p) = f_0^\pm(Rx,R\bar p,p_3), \,
    U_0(x) = U_0(Rx), \,
    A^0(x) = RA^0(Rx), \,
    B^0(x) = RB^0(Rx)
\end{align*}
for all $x\in\R^2$, $p=(\bar p, p_3)\in \R^3$, and any rotation matrix $R\in \mathrm{SO}(2)$. In the following, we will sometimes write 
\[f_0^\pm(r,p)=f_0^\pm(x,p),\;\; U_0(r)=U_0(x),\;\; A^0(r)=A^0(x),\;\; B^0(r)=B^0(x)\]
with some abuse of notation.


To construct a nontrivial, confined steady state, we make the general ansatz
\begin{align*}
    f_0^\pm=\eta^\pm(\E^\pm,\F^\pm,\G^\pm)
\end{align*}
where $\eta^\pm\in C^1(\R^3)$, and
\begin{align*}
    \E^\pm(x,p) := \frac 12 \abs{p}^2 \pm U_0, \quad
    \F^\pm(x,p) := r(p_\varphi \pm A_\varphi^0), \quad
    \G^\pm(x,p) := p_3 \pm A_3^0
\end{align*}
are invariants of the characteristic flow.

Furthermore, also due to the axial symmetry, we need some further assumptions on the external magnetic vector potential.


\paragraph{General assumption on $A^0$.}
\begin{enumerate}[label = $\mathrm{(A\arabic*)}$, ref = $\mathrm{(A\arabic*)}$, start = 0]
    \item\label{COND:A0} The external magnetic vector potential $A^0\in C^2(\R^2;\R^3)$ satisfies $A_r^0=0$ everywhere on $\R^2$ and $A_\varphi^0=A_3^0=0$ on the $z$-axis.
\end{enumerate}
To make this assumption plausible, we point out three facts: First, $A_r^0$ does not affect $B$ and should satisfy $\frac1r \partial_r(rA_r^0)=0$ in view of the gauge condition $\divg_xA^0=0$, so that the choice $A_r^0=0$ is no loss of generality. Second, $A_\varphi^0$ necessarily has to vanish on the $z$-axis in order for $B^0=\curl_xA^0$ to be well-defined on the $z$-axis in the classical sense. Third, since adding constants to $A^0$ does not affect $B^0$, the choice $A_3^0=0$ on the $z$-axis is also no loss of generality.

For any given radius $R_c>0$ of the reaction chamber and any radius $0<R<R_c$, the following theorem ensures the existence of a nontrivial steady state that is confined in the infinite cylinder about the $z$-axis with radius $R$. Note that both a $\theta$-pinch and a $z$-pinch configuration are considered.

\begin{theorem}[Existence of confined steady states] \label{THM:Existence}
    Suppose that \ref{COND:A0} holds and let $R$, $R_c\in\R$ with $0<R<R_c$ be arbitrary.
    We further assume that the ansatz functions $\eta^\pm \in C^1(\R^3)$ and the axially symmetric external magnetic vector potential $A^0\in C^2(\R^2;\R^3)$ satisfy the following assumptions:
    \begin{enumerate}[label = $\mathrm{(A\arabic*)}$, ref = $\mathrm{(A\arabic*)}$]
    \item\label{COND:A1}
    There exist functions $\eta_*^\pm\in L^1(\R^2)$ such that 
    \begin{align*}
    	\eta^\pm(\E,\F,\G)\le\eta_*^\pm(\E,\G) \quad\text{for all $(\E,\F,\G)\in\R^3$.}
    \end{align*}
    \item\label{COND:A2} 
    There exist functions $\eta_\#^\pm\in L^1(\R)$ such that 
    \begin{align*}
    	|\del_{\E}\eta^\pm(\E,\F,\G)|\le\eta_\#^\pm(\E,\G) \quad\text{for all $(\E,\F,\G)\in\R^3$,}
    \end{align*}
    \item\label{COND:A3}
    There exist cut-off energies $\E_{\max}^\pm > 0$ such that
    \begin{align*}
    	\eta^\pm(\E,\F,\G)=0 \quad\text{if $\E\ge\E_{\max}^\pm$,}
    \end{align*}
     \item\label{COND:A4}
    There exist real numbers $a>0$, $b>0$ and $c>0$ such that each of the functions $\eta^\pm$ is strictly positive on $]0,a[\;\times\;]-b,0[\;\times\;]-c,c[$ or on $]0,a[\;\times\;]0,b[\;\times\;]-c,c[$.
    \item\label{COND:A5}
    If a $\theta$-pinch configuration is considered, it holds that 
    \begin{align}
    	\label{COND:THETA}
    	\eta^\pm(\E,\F,\G)=0 \quad\text{if $\;\pm\F\ge 0$}
    \end{align} 
    and $A_\varphi^0$ satisfies
    \begin{align}\label{COND:THETA_A}
        A_\varphi^0(r)\ge \max_{\pm}\sqrt{2\E_{\max}^\pm+4\pi^2\|\eta_*^\pm\|_1r^2}\quad\text{for all }r\ge R.
    \end{align}
    If a $z$-pinch configuration is considered, there exist $\G_0^\pm\in\R$ with $\pm\G_0>0$ such that 
    \begin{align}\label{COND:Z}
    	\eta^\pm(\E,\F,\G)=0 \quad\text{if $\;\pm\G\ge\pm\G_0$,}
    \end{align} 
    and that $A_3^0$ satisfies
    \begin{align}\label{COND:Z_A}
        A_3^0(r)\ge\max_{\pm}\left(|\G_0^\pm|+\sqrt{2\E_{\max}^\pm+4\pi^2\|\eta_*^\pm\|_1r^2}\right)\quad\text{for all }r\ge R.
    \end{align}
    \end{enumerate}
    Then the ansatz
    \begin{align}
    \label{ANS:SS}
        f_0^\pm=\eta^\pm(\E^\pm,\F^\pm,\G^\pm),
    \end{align}
    defines a nontrivial steady state of the Vlasov--Poisson system \eqref{eq:WholeSystem} in the sense of Definition~\ref{DEF:SS}.
    In particular, $f_0=(f_0^+,f_0^-)$ is compactly supported in $\overline{\mathcal B_R}\times\R^3\subset\R^5$ and thus 
    confined in the cylinder with radius $R<R_c$.
\end{theorem}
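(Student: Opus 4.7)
The plan is to exploit the fact that the ansatz \eqref{ANS:SS} automatically satisfies the stationary Vlasov equation, because $\E^\pm$, $\F^\pm$ and $\G^\pm$ are first integrals of the characteristic flow. Hence the task reduces to finding an axially symmetric $U_0\in C^2(\R^2)$ that solves Poisson's equation with
\begin{equation*}
    \rho_0^\pm[U_0](x) = \int_{\R^3} \eta^\pm\bigl(\tfrac12|p|^2 \pm U_0(x),\, r(p_\varphi \pm A_\varphi^0(r)),\, p_3 \pm A_3^0(r)\bigr) \dv
\end{equation*}
as self-consistent source, together with the asymptotic condition \eqref{eq:PoissonBC:SS}. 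Assumption \ref{COND:A0} guarantees that this ansatz is consistent with axial symmetry.

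First I would establish the uniform a priori bound $\rho_0^\pm[U]\le 2\pi\|\eta_*^\pm\|_1$, valid for \emph{any} candidate $U$. This is obtained by substituting $\bar p_\varphi = p_\varphi \pm A_\varphi^0(r)$, $\bar p_3 = p_3 \pm A_3^0(r)$, invoking \ref{COND:A1}, and passing to polar coordinates in the $(p_r,\bar p_\varphi \mp A_\varphi^0(r))$-plane so that the $\pm U$ term gets absorbed as a shift of the energy variable. Inserting this into Poisson's equation in its radial form $(rU_0')'=-4\pi r(\rho_0^+-\rho_0^-)$ and integrating twice in $r$ yields a quadratic growth bound on $U_0$ of precisely the form that matches the thresholds $\sqrt{2\E_{\max}^\pm+4\pi^2\|\eta_*^\pm\|_1 r^2}$ appearing in \eqref{COND:THETA_A} and \eqref{COND:Z_A}.

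With these estimates in hand, confinement follows directly. Fix $r\ge R$ and $(x,p)$ with $\eta^\pm(\E^\pm,\F^\pm,\G^\pm)\ne 0$; by \ref{COND:A3} we have $|p|^2\le 2\E_{\max}^\pm\mp 2U_0(r)\le 2\E_{\max}^\pm+4\pi^2\|\eta_*^\pm\|_1 r^2$, so in the $\theta$-pinch case \eqref{COND:THETA_A} yields $|p_\varphi|\le A_\varphi^0(r)$ and therefore $\pm(p_\varphi\pm A_\varphi^0(r))\ge 0$, i.e., $\pm\F^\pm\ge 0$; this contradicts \eqref{COND:THETA}. The $z$-pinch case is analogous, using \eqref{COND:Z_A} to force $\pm\G^\pm\ge\pm\G_0^\pm$ and then invoking \eqref{COND:Z}. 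Combined with the $|p|$-bound from \ref{COND:A3}, this shows that $\supp f_0^\pm\subset\overline{\mathcal B_R}\times\R^3$ is compact.

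Finally, to actually construct $U_0$, I would apply Schauder's fixed-point theorem to the operator $T:U\mapsto -2\int\rho_0[U](y)\ln|\cdot-y|\dy$ on a closed convex subset $K$ of axially symmetric continuous functions on $\R^2$ satisfying the quadratic growth bound derived above. The uniform $L^\infty$-estimate on $\rho_0[U]$, together with $\supp\rho_0[U]\subset\overline{\mathcal B_R}$ and Lemma~\ref{LEM:EP}, gives $T(K)\subset K$, $C^2$-regularity, and compactness; continuity of $T$ follows from dominated convergence with majorants supplied by \ref{COND:A1} and \ref{COND:A2}. The resulting fixed point is the desired $U_0$, and $f_0^\pm\in C^1$ by the chain rule. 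Nontriviality comes from \ref{COND:A4}: near the $z$-axis $A_\varphi^0$ and $A_3^0$ vanish by \ref{COND:A0}, so choosing $(x,p)$ with small $r$, $p_\varphi$, $p_3$ and moderate $|p|$ places $(\E^\pm,\F^\pm,\G^\pm)$ in the open region where $\eta^\pm>0$. The expected main obstacle is the circular coupling between the confinement argument and the a priori bound on $U_0$; I would resolve this by incorporating the quadratic bound into the definition of $K$ a priori and then verifying that $T$ preserves it.
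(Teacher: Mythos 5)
Your overall strategy --- reducing to a self-consistent problem for $U_0$, deriving the uniform bound $\rho_0^\pm[U]\le 2\pi\|\eta_*^\pm\|_1$ from \ref{COND:A1}, deducing confinement from \ref{COND:A5}, and obtaining nontriviality from \ref{COND:A4} together with \ref{COND:A0} --- is exactly the route the paper takes (following \cite{knopfStSt,weberStSt}), and those parts are sound. The one place where your plan does not close as written is the normalization of the electric potential, and it sits precisely at the point you yourself identified as the main obstacle. The signed quadratic bound $\mp U_0(r)\le 2\pi^2\|\eta_*^\pm\|_1 r^2$ that calibrates the thresholds in \eqref{COND:THETA_A} and \eqref{COND:Z_A} comes from integrating the radial Poisson equation twice \emph{from $r=0$}, i.e.\ from the representation \eqref{eq:U_intermsof_rho} with the normalization $U_0(0)=0$; in fact the two-sided signed bound forces $U(0)=0$ for every element of your set $K$. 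Your operator $T(U)=-2\ln|\cdot|\ast\rho_0[U]$, however, produces the potential normalized at infinity as in \eqref{eq:PoissonBC:SS}; it differs from \eqref{eq:U_intermsof_rho} by the constant $-2\int\rho_0[U](y)\ln|y|\dy$, which is nonzero in general. Hence $T(U)(0)\neq 0$ and $T(K)\not\subset K$, so Schauder cannot be applied on the set you describe. This is not cosmetic: since $\E^\pm=\frac12|p|^2\pm U_0$ enters the ansatz, shifting $U_0$ by a constant changes $\rho_0[U_0]$, so you cannot simply renormalize after the fact.

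The repair is the one the paper uses: take the fixed-point operator to be the radial double integral \eqref{eq:U_intermsof_rho} itself, i.e.\ solve Poisson's equation with $U_0(0)=0$ and accept that \eqref{eq:PoissonBC:SS} then holds only up to an additive constant --- the paper explicitly declares this an equivalent boundary condition since it leaves the electric field unchanged. With that operator your invariance, compactness and continuity arguments go through, and the a priori bound, the confinement argument and the nontriviality argument become mutually consistent. Two further minor points: the paper obtains the fixed point on bounded intervals $[0,\delta]$ via Schaefer's theorem or a direct iteration, using \ref{COND:A2} quantitatively (as a Lipschitz-type estimate for $U\mapsto\rho_0[U]$) rather than only for the continuity of $T$; and your polar-coordinate substitution should be carried out in the $(p_r,p_\varphi)$-plane --- the expression $\bar p_\varphi\mp A_\varphi^0(r)$ you wrote is just $p_\varphi$.
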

Here and in the following, we denote by $\mathcal B_R$ the ball in $\R^2$ about the origin with radius $R$.


For the construction of such a steady state, we follow the reasoning in \cite{knopfStSt} and \cite{weberStSt}. In \cite{knopfStSt} the existence of confined steady states for the two dimensional Vlasov--Poisson system was established, whereas in \cite{weberStSt} confined steady states for the two and one-half-dimensional Vlasov--Maxwell system were constructed.
As we are investigating steady states of the two and one-half dimensional Vlasov--Poisson system, the arguments of \cite{knopfStSt} and \cite{weberStSt} can be combined.
Indeed, since our line of argument is very similar, we only outline the strategy and sketch the most important steps.

\begin{proof}[Sketch of the proof]
As in \cite{knopfStSt}, we first observe that the densities $\rho_0^\pm$ can be expressed by means of the ansatz functions $\eta^\pm$ via
\begin{align} 
    &\rho_0^\pm(r)=\int_{\R^3}f_0^\pm(r,p)\dv
    =\int_{\R^3}\eta^\pm\left(\frac12|p|^2\pm U_0(r),r(p_\varphi\pm A_\varphi^0(r)),p_3\pm A_3^0(r)\right)\dv\nonumber\\[1ex]
    &=\int_{\R}\int_0^\infty\int_0^{2\pi}u\, \eta^\pm \left(\frac12u^2+\frac12p_3^2\pm U_0(r),r(u\sin\theta\pm A_\varphi^0(r)),p_3\pm A_3^0(r)\right)\dint\theta\dint u\dv_3\nonumber\\[1ex]
    &=\int_{\R}\int_{\frac12(\G\mp A_3^0(r))^2\pm U_0(r)}^\infty\int_0^{2\pi}
    \eta^\pm\Big(\E,r\big(\sqrt{2(\E\mp U_0(r))-(\G\mp A_3^0(r))^2}\sin\theta+A_\varphi^0(r)\big),\G\Big)
    \nonumber\\[-1ex]
    &\hspace{0.8\textwidth}\dint\theta\dint\E\dint\G.\label{eq:rho_intermsof_U}
\end{align}
Note that the integral is well defined due to the assumption~\ref{COND:A1}.
Under our symmetry assumptions, the Laplacian simplifies to $\Delta_x=\frac1r\delr(r\delr)$. Thus, \eqref{eq:Poisson:SS} can be integrated with respect to $r$, which yields
\begin{align}\label{eq:U_intermsof_rho}
    U_0(r)=-4\pi\int_0^r\frac1s\int_0^s\sigma\rho_0(\sigma)\dint\sigma\dint s.
\end{align}
We point out that in this formula, the boundary condition $U_0(0)=0$ is incorporated. Thus, in general, the boundary condition \eqref{eq:PoissonBC:SS} cannot be satisfied. However, under the assumption that $\rho_0$ is compactly supported, it is easy to see that the limit in \eqref{eq:PoissonBC:SS} exists and is finite, but not necessarily vanishes. Since adding constants to the electric potential does not affect the electric field, \eqref{eq:PoissonBC:SS} can be replaced by the boundary condition $U_0(0)=0$ which corresponds to \eqref{eq:U_intermsof_rho}. For a more detailed discussion see \cite[Sect.~3]{knopfStSt}.

After merging \eqref{eq:rho_intermsof_U} and \eqref{eq:U_intermsof_rho} the problem of finding a steady state can then be regarded as a fixed point problem for $U_0$. Under the assumption~\ref{COND:A2},
on each bounded interval $[0,\delta]$, $\delta>0$, this fixed point problem (in the space of continuous functions) can be tackled by Schaefer's fixed point theorem (cf. \cite{weberStSt}) or a direct fixed point iteration (cf. \cite{knopfStSt}) to obtain a unique solution, which gains regularity a posteriori by means of \eqref{eq:U_intermsof_rho}. 
Now, that $U_0$ is constructed, the quantity $\E^\pm$ is known explicitly. Since $\E^\pm$, $\F^\pm$ and $\G^\pm$ are invariants of the characteristic flow, we conclude that the ansatz \eqref{ANS:SS} defines a steady state $f_0=(f_0^+,f_0^-)$ of the Vlasov--Poisson system \eqref{eq:WholeSystem} (with \eqref{eq:PoissonBC:SS} being replaced by the equivalent boundary condition $U_0(0)=0$) in the sense of Definition~\ref{DEF:SS}.

Having the steady state $f_0$ at hand, it remains to establish its claimed properties. Firstly, the assumption~\ref{COND:A3} suffices to ensure that $f_0$ is compactly supported with respect to the velocity variable $p$ (cf. \cite[Theorem~4.8(i)]{weberStSt}).  
Secondly, the assumption~\ref{COND:A4} implies that $f_0$ is nontrivial (cf. \cite[Theorem~4.8(ii)]{weberStSt}).
Thirdly, and most importantly, we need to show that the steady state is confined within a cylinder with radius $R<R_c$. Proceeding as in the proof of \cite[Theorem~5.1]{weberStSt} this assertion can be established by means of the assumption~\ref{COND:A5} which ensures that the magnetic field $B^0$ is sufficiently strong to confine the plasma. We point out that, in order to satisfy \eqref{COND:THETA_A} (or \eqref{COND:Z_A}), the function $A_\varphi^0$ (or $A_3^0$) cannot be chosen completely arbitrarily on $[0,R]$ due to the condition $A_\varphi^0(0)=0$ (or $A_3^0(0)=0$) resulting from the normalization at $r=0$.
Now all claims are established.
\end{proof}

\medskip

Let us make some concluding remarks.

\begin{remark}
    \begin{enumerate}[label = $\mathrm{(\alph*)}$, ref = $\mathrm{(\alph*)}$]
    \item The assumptions \ref{COND:A1}--\ref{COND:A5} entail that $a \le \min\{\E_{\max}^+,\E_{\max}^-\}$, and that $c \le \min\{\G_0^+,-\G_0^-\}$ in the $z$-pinch case.
    \item In the same fashion we could also consider the corresponding the \enquote*{sign-reversed} cases where the conditions $\pm\F\ge 0$ in \eqref{COND:THETA} or $\pm\G\ge\pm\G_0>0$ in \eqref{COND:Z_A} are replaced by $\pm\F\le 0$ or $\pm\G\le\pm\G_0<0$, respectively. In these alternative cases, the results of Theorem~\ref{THM:Existence} could be established analogously.
    \item It would certainly also be possible to construct a nontrivial, confined steady state in a \emph{screw-pinch configuration} that is a combination of a $\theta$-pinch and a $z$-pinch. In this case the external magnetic vector potential would be given as a superposition of two vector-potentials $A_\varphi^0$ and $A_3^0$, i.e., $A^0 = A_\varphi^0 + A_3^0$, such that \ref{COND:A5} is satisfied in the sense that \eqref{COND:THETA} and \eqref{COND:Z} hold, and \eqref{COND:THETA_A} or \eqref{COND:Z_A} hold for all $r\ge R$. However, for simplicity, we restrict ourselves to the investigation of the pure $\theta$-pinch and $z$-pinch configurations.
    \item It is important to point out that we slightly differ from \cite{weberStSt} as follows: In \cite{weberStSt} the ansatz \eqref{ANS:SS} is only made in the region $r\le R_c$ and outside $f_0^\pm$ is defined to be zero (that is to say, possible shells in the region $r>R_c$ are left out). This causes no problem thanks to $f_0^\pm=0$ on $[R,R_c]$. As a consequence, \eqref{COND:THETA_A} or \eqref{COND:Z_A} only need to be imposed on $[R,R_c]$, but it is unclear whether the relation \eqref{ANS:SS} is also satisfied for $r>R_c$. However, in this paper it will be important---in particular, for \eqref{eq:Case2}---that \eqref{ANS:SS} holds for all $r\ge 0$. Thus, \eqref{COND:THETA_A} or \eqref{COND:Z_A} have to be imposed on $[R,\infty[$.
    \end{enumerate}
\end{remark}
 

%
%

\section{Stability of confined steady states}\label{sec:stability}

We split the investigation of stability properties into two parts: First, the external magnetic potential is fixed and the initial data are perturbed. Second, the initial data are fixed and the external magnetic potential is perturbed. In the end we show that both results can also be combined.

Throughout this section, we will use the notation
\[\RR(h)\coloneqq\min\{r\ge0\mid h(x,v)=0\text{ if }|x|\ge r\}\]
for any axially symmetric $h\in C_c(\R^5)^j$ with $j\in\{1,2\}$.

In Subsection \ref{sec:Pert_Initdata}, we will mainly follow the ideas of \cite{BRV_Stability}. We fix an external magnetic potential $A^0$ satisfying \ref{COND:A0} and make the following ansatz for the steady state $f_0^\pm$:
\begin{align}
    \label{DEF:F0}
    f_0^\pm = \eta^\pm(\E^\pm,\F^\pm,\G^\pm) = \vartheta^\pm(\E^\pm) \, \psi^\pm(\F^\pm,\G^\pm).
\end{align}

In the following, we consider a $\theta$-pinch configuration. In order to satisfy the assumptions made in Section~\ref{sec:Existence}, we assume
\begin{enumerate}[label = $\mathrm{(S\arabic*)}$, ref = $\mathrm{(S\arabic*)}$]
    \item\label{COND:S1} $\vartheta^\pm\in C^1(\R)\cap W^{1,1}(\R)$, and there exists a $\E_{\max}^\pm>0$ such that $\vartheta^\pm(\tau)=0$ for all $\tau\ge\E_{\max}^\pm$ and $\vartheta^\pm(\tau)>0$ for all 
    $\tau < \E_{\max}^\pm$;
    \item
    $\psi^\pm\in C^1(\R^2)$, $\psi^\pm(\sigma,\mu)>0$ for $\pm \sigma<0$, and $\psi^\pm(\sigma,\mu)=0$ for $\pm \sigma\ge 0$, and there exist $\psi_\ast^\pm\in L^1(\R)$ such that $\psi^\pm(\sigma,\mu)\le\psi_\ast^\pm(\mu)$ for all $(\sigma,\mu)\in\R^2$;
    \item\label{COND:S3} there exists $0<\tilde R<R_c$ such that $A_\varphi^0(r)\ge \max_{\pm}\sqrt{2\E_{\max}^\pm+4\pi^2\|\vartheta^\pm\|_1\|\psi_*^\pm\|_1\,r^2}$ for all $r\ge\tilde R$.
\end{enumerate}
Here, $R_c>0$ stands again for the fixed radius of the cylindrical reaction chamber.

Since the assumptions \ref{COND:S1}--\ref{COND:S3} imply \ref{COND:A1}--\ref{COND:A5} (with $\tilde R$ instead of $R$ in \ref{COND:A5}), Theorem~\ref{THM:Existence} ensures that the steady state $f_0$ in \eqref{DEF:F0} actually exists, is nontrivial and confined in a cylinder with radius $R=\RR(f_0)\le\tilde R < R_c$. In particular, $R$ is thus strictly smaller than the radius $R_c$ of the reaction chamber.

We let $R^\pm\coloneqq \RR(f_0^\pm)$, $\E_0^\pm\coloneqq\frac12|p|^2\pm U_0$, and
\[\E_{\min}^\pm\coloneqq\inf_{\mathcal B_{R^\pm}\times\R^3}\E_0^\pm=\inf_{\mathcal B_{R^\pm}}\pm U_0.\]
Notice that $\E_{\min}^\pm<\E_{\max}^\pm$ since the steady state $f_0=(f_0^+,f_0^-)$ is nontrivial.

Moreover, we assume
\begin{enumerate}[label = $\mathrm{(S\arabic*)}$, ref = $\mathrm{(S\arabic*)}$, resume]
    \item\label{COND:S4} $(\vartheta^\pm)'(\tau)<0$ for $\tau\in[\E_{\min}^\pm,\E_{\max}^\pm[$,
\end{enumerate}
which is crucial in order to obtain stability.

For $f=(f^+,f^-)\in C_c(\R^5)^2$, we introduce the kinetic energy 
\[E_{\mathrm{kin}}(f)=\sum_\pm\frac12\int_{\R^5} |p|^2f^\pm\dz\]
and the potential energy
\[E_{\mathrm{pot}}(f)=\frac12\int_{\R^2}U_f\rho_f\dx=-\int_{\R^2}\int_{\R^2}\ln|x-y|\rho_f(y)\rho_f(x)\dy\dx,\]
where
\[\rho_f=\int_{\R^3}f^+\,dp-\int_{\R^3}f^-\,dp,\quad U_f=-2\ln|\cdot|\ast\rho_f.\]
It is very important to notice that we cannot use integration by parts to equivalently describe the potential energy as an integral over $|\delx U_f|^2$ as it would be the case in a three dimensional setting.
This is because, in contrast to a 3D situation with kernel $|\cdot|^{-1}$, the kernel $\ln|\cdot|$ does not vanish at infinity. 
We further recall that the total energy
\[\H(f)=E_{\mathrm{kin}}(f)+E_{\mathrm{pot}}(f)\]
is conserved along classical solutions of \eqref{eq:WholeSystem}.

\subsection{Stability with respect to perturbations of the initial data}\label{sec:Pert_Initdata}

We consider perturbations from the following set of functions:
\begin{align*}
    X&\coloneqq
    \left\{
    g=(g^+,g^-)\in C_c^1(\R^5)^2 
    \,\middle|\,
    \begin{aligned}
    & g^\pm\ge 0,\,g^\pm\text{ are axially symmetric},
    \\
    &g^\pm/(\psi^\pm\circ(\F^\pm,\G^\pm))\in L^1(\{\pm\F^\pm<0\}),\\
    &\int_{\R^2}\rho_g\,dx=\int_{\R^2}\rho_0\,dx
    \end{aligned}
    \right\}.
\end{align*}
Notice that, on the one hand, the assumption that $g$ has the same total charge as $f_0$ is reasonable since physically meaningful perturbations preserve the total charge, and, on the other hand, important for the positive definiteness of the potential energy induced by a difference $g-f_0$; cf. Lemma \ref{lem:estimates_potential_energy}(b).

Before we proceed, we first recall that the potential $U_0$ satisfies \eqref{eq:U_intermsof_rho}. We now define $M(r)\coloneqq 2\pi\int_0^rs\rho_0(s)\ds$, i.e., $M(r)$ is the charge in a sliced circle with radius $r$, and the total charge in each slice can be expressed as $M = M(R)$. Then the electric potential $U_0$ can be written as
\begin{align*}
    U_0(r)=\begin{cases}-2\int_0^r\frac{M(s)}{s}\ds,&0\le r\le R,\\-2\int_0^R\frac{M(s)}{s}\ds-2M(\ln r-\ln R),&r>R.\end{cases} 
\end{align*}
Therefore, the function 
\begin{align}
    \label{DEF:XI}
    \xi(r)\coloneqq\sup_{\mathcal B_r}|U_0|,\quad r > 0  
\end{align}
grows logarithmically on $[R,\infty[$ if $M\neq 0$ and is constant on $[R,\infty[$ if $M=0$.

\subsubsection{Construction of the energy-Casimir functional}
To prove stability with respect to perturbations of the initial data, we employ the \emph{energy-Casimir method}. 
The idea is to construct a \textit{Casimir functional}
\begin{align}\label{eq:Casimir}
    \C(f)=\sum_\pm\int_{\R^2}\int_{\R^3}\Phi^\pm(f^\pm,\F^\pm,\G^\pm)\dv\dx
\end{align}
in such a way that its derivative at $f_0$ matches exactly or at least dominates the negative of the linear part of the expansion
\begin{align}
    \H(f)&=\H(f_0)+\sum_\pm\int_{\R^2}\int_{\R^3}\E_0^\pm(f^\pm-f_0^\pm)\dv\dx\nonumber\\
    &\phantom{=\;}-\int_{\R^2}\int_{\R^2}\ln|x-y|(\rho_f-\rho_0)(y)(\rho_f-\rho_0)(x)\dy\dx.\label{eq:H_expansion}
\end{align}
Note that for any choice of $\Phi$ any $f\in X$, the expression $\mathcal C(f)$ is constant in time. 
For more background on the energy-Casimir method we refer to \cite{BRV_Stability,GuoRein,Rein_Casimir,Rein_Casimir_2,HMRW_Casimir}. 

Let $\vartheta_{\max}^\pm\coloneqq\vartheta^\pm(\E_{\min}^\pm)$. By assumption, the map $\vartheta^\pm\colon[\E_{\min}^\pm,\E_{\max}^\pm]\to[0,\vartheta_{\max}^\pm]$ is strictly decreasing and onto. 
We now construct the functions $\Phi^\pm\colon[0,\infty[\times\R\times\R\to\R$ which defines the Casimir functional $\C(f)$. 
Note that $\Phi^\pm=\Phi^\pm(\tau,\sigma,\mu)$ can be chosen arbitrarily (e.g., zero) if $\sigma=0$ since the set $\{\F^\pm=0\}$ has Lebesgue measure zero. First, we consider $\pm\sigma<0$. For $\mu\in\R$ and $\tau\in[0,\vartheta_{\max}^\pm\psi^\pm(\sigma,\mu)]$ we define
\[\Phi^\pm(\tau,\sigma,\mu)\coloneqq-\psi^\pm(\sigma,\mu)\int_0^{\tau/\psi^\pm(\sigma,\mu)}(\vartheta^\pm)^{-1}(s)\ds.\] 
Clearly, $\Phi^\pm(\cdot,\sigma,\mu) \in C^1([0,\vartheta_{\max}^\pm\psi^\pm(\sigma,\mu)])\cap C^2(]0,\vartheta_{\max}^\pm\psi^\pm(\sigma,\mu)])$ with
\[\del_\tau^\pm\Phi(\tau,\sigma,\mu)=-(\vartheta^\pm)^{-1}\left(\frac{\tau}{\psi^\pm(\sigma,\mu)}\right)\]
for $\tau\in[0,\vartheta_{\max}^\pm\psi^\pm(\sigma,\mu)]$ and
\begin{align*}
\del_\tau^2\Phi^\pm(\tau,\sigma,\mu)
&=-\frac{1}{(\vartheta^\pm)'\big((\vartheta^\pm)^{-1}(\tau/\psi^\pm(\sigma,\mu))\big)\psi^\pm(\sigma,\mu)} \\
&\ge -\frac{1} {\big(\inf_{[\E_{\min}^\pm,\E_{\max}^\pm]}(\vartheta^\pm)'\big)\psi^\pm(\sigma,\mu)}\eqqcolon\frac{c_{\vartheta^\pm}}{\psi^\pm(\sigma,\mu)}
\end{align*}
for $\tau\in]0,\vartheta_{\max}^\pm\psi^\pm(\sigma,\mu)]$; notice that $c_{\vartheta^\pm}>0$. Next, for $\tau>\vartheta_{\max}^\pm\psi^\pm(\sigma,\mu)$ we choose 
\begin{align*}
    \Phi^\pm(\tau,\sigma,\mu)
    &\coloneqq-\frac{(\tau-\vartheta_{\max}^\pm\psi^\pm(\sigma,\mu))^2}{2(\vartheta^\pm)'(\E_{\min}^\pm)\psi^\pm(\sigma,\mu)}-\E_{\min}^\pm(\tau-\vartheta_{\max}^\pm\psi^\pm(\sigma,\mu))\\
    &\phantom{\eqqcolon\;}-\psi^\pm(\sigma,\mu)\int_0^{\vartheta_{\max}^\pm}(\vartheta^\pm)^{-1}(s)\dint s
\end{align*}
to extend $\Phi^\pm$ to a function $\Phi^\pm\in C^1([0,\infty[\times\R_{\mp}\times\R)$ with $\Phi^\pm(\cdot,\sigma,\mu)\in C^2(]0,\infty[)$, and
\begin{alignat}{2}
    |\Phi^\pm(\tau,\sigma,\mu)|&\le c_\pm\left(\tau+\frac{\tau^2}{\psi^\pm(\sigma,\mu)}\right),
    &&\quad\tau\ge 0,\nonumber\\
    |\del_\tau\Phi^\pm(\tau,\sigma,\mu)|&\le c_\pm\left(1+\frac{\tau}{\psi^\pm(\sigma,\mu)}\right),
    &&\quad\tau\ge 0,\label{eq:Phi_1st_der_bound_1}\\
    \del_\tau^2\Phi^\pm(\tau,\sigma,\mu)&\ge\frac{c_{\vartheta^\pm}}{\psi^\pm(\sigma,\mu)},
    &&\quad\tau>0.
    \label{eq:Phi_2nd_der_pos_def_1}
\end{alignat}
for positive constants $c_\pm$ and $c_{\vartheta^\pm}$ that may depend on $\psi^\pm$ and $\vartheta^\pm$ but not on $(\tau,\sigma,\mu)$.

Let us now consider $\pm\sigma>0$. We fix $r_0>0$ such that
\begin{align}\label{eq:r_0}
    \min\{A_\varphi^0(r),r\}\ge\sqrt{2|U_0(r)|}\quad\text{for all }r\ge r_0.
\end{align}
Notice that there exists such an $r_0$ due to \ref{COND:S3} and the fact that $|U_0|$ grows at most logarithmically for large $r$. For all $\tau\ge 0$ and $\mu\in\R$, we define
\[\Phi^\pm(\tau,\sigma,\mu)\coloneqq \big(\pm\sigma+\xi(r_0)\big)\tau,\]
where $\xi$ is the function introduced in \eqref{DEF:XI}. Thus, $\Phi^\pm\in C^\infty([0,\infty[\times\R_\pm\times\R)$.

Now, for any $f\in X$, we define $\C(f)$ by \eqref{eq:Casimir}.
Notice that $\C(f)$ is well-defined because of
\begin{align*}
    &\int_{\R^2}\int_{\R^3}|\Phi^\pm(f^\pm,\F^\pm,\G^\pm)|\dv\dx\\
    &=\iint\limits_{\{\pm\F^\pm>0\}}|\Phi^\pm(f^\pm,\F^\pm,\G^\pm)|\dv\dx+\iint\limits_{\{\pm\F^\pm<0\}}|\Phi(f^\pm,\F^\pm,\G^\pm)|\dv\dx\\
    &\le \|\F^\pm f^\pm\|_1+(\xi(r_0)+c_\pm)\|f^\pm\|_1+c_\pm\|f^\pm\|_\infty\iint\limits_{\{\pm\F^\pm<0\}}\frac{f^\pm}{\psi^\pm(\F^\pm,\G^\pm)}\dv\dx<\infty.
\end{align*}
Since $\F^\pm$ and $\G^\pm$ are invariants of the characteristic flow associated with $B^0 = \curl_x A^0$, $\C$ is conserved along classical solutions of \eqref{eq:WholeSystem} to the external vector potential $A^0$. Hence, the same holds for the \emph{energy-Casimir functional} $\H_\C\coloneqq \H+\C$. In view of \eqref{eq:H_expansion}, we have the expansion
\begin{align}
    &\H_\C(f)-\H_\C(f_0)\nonumber\\
    &=\sum_\pm\int_{\R^2}\int_{\R^3}(\Phi^\pm(f^\pm,\F^\pm,\G^\pm)-\Phi^\pm(f_0^\pm,\F^\pm,\G^\pm)+\E_0^\pm(f^\pm-f_0^\pm))\dv\dx\nonumber\\
    &\phantom{=\;}-\int_{\R^2}\int_{\R^2}\ln|x-y|(\rho_f-\rho_0)(y)(\rho_f-\rho_0)(x)\dy\dx\label{eq:HC_expansion}.
\end{align}
In the next two subsections, we establish lower and upper estimates of the right-hand side of this expansion.

\subsubsection{Lower estimates on the expansion}
First, we infer from Lemma~\ref{lem:estimates_potential_energy}(b) that
\begin{align*}
    &\H_\C(f)-\H_\C(f_0)\\
    &\ge\sum_\pm\int_{\R^2}\int_{\R^3}(\Phi^\pm(f^\pm,\F^\pm,\G^\pm)-\Phi^\pm(f_0^\pm,\F^\pm,\G^\pm)+\E_0^\pm(f^\pm-f_0^\pm))\dv\dx\\
    &=\sum_\pm\left(\;\;\iint\limits_{\{\pm\F^\pm<0,f_0^\pm>0\}}+\iint\limits_{\{\pm\F^\pm<0,f_0^\pm=0\}}+\iint\limits_{\{\pm\F^\pm>0,f_0^\pm=0\}}\right)Q^\pm\dv\dx,
\end{align*}
where $Q^\pm$ is defined as
\[Q^\pm\coloneqq\Phi^\pm(f^\pm,\F^\pm,\G^\pm)-\Phi(f_0^\pm,\F^\pm,\G^\pm)+\E_0^\pm(f^\pm-f_0^\pm);\]
notice that $\{\F^\pm=0\}\subset\R^5$ has Lebesgue measure zero and that $f_0^\pm=0$ if $\pm\F^\pm>0$. Now, we fix $(x,p) \in\R^5$ and consider three cases corresponding to the decomposition above. For simplicity, we will suppress the argument $(x,p)$ in the following.

\textit{Case 1:} $\pm\F^\pm<0$, $f_0^\pm>0$: Then $\E_0^\pm\in[\E_{\min}^\pm,\E_{\max}^\pm[$ and
\[\E_0^\pm=(\vartheta^\pm)^{-1}(\vartheta^\pm(\E_0^\pm))=(\vartheta^\pm)^{-1}\left(\frac{f_0^\pm}{\psi^\pm(\F^\pm,\G^\pm)}\right)=-\del_\tau\Phi^\pm(f_0^\pm,\F^\pm,\G^\pm).\]
By the known regularity of $\Phi^\pm$ and \eqref{eq:Phi_2nd_der_pos_def_1}, we have
\begin{align*}
    Q^\pm&=\Phi^\pm(f^\pm,\F^\pm,\G^\pm)-\Phi^\pm(f_0^\pm,\F^\pm,\G^\pm)-\del_\tau\Phi^\pm(f_0^\pm,\F^\pm,\G^\pm)(f^\pm-f_0^\pm)\\
    &=\lim_{\varepsilon\to 0^+}(\Phi^\pm(f^\pm+\varepsilon,\F^\pm,\G^\pm)-\Phi^\pm(f_0^\pm,\F^\pm,\G^\pm)\\
    &\phantom{=\;\lim_{\varepsilon\to 0^+}}-\del_\tau\Phi(f_0^\pm,\F^\pm,\G^\pm)(f^\pm+\varepsilon-f_0^\pm))\\
    &=\lim_{\varepsilon\to 0^+}\frac12\del_\tau^2\Phi^\pm(\zeta_\varepsilon,\F^\pm,\G^\pm) (f^\pm+\varepsilon - f_0^\pm)^2 \ge\frac{c_{\vartheta^\pm}}{2\psi^\pm(\F^\pm,\G^\pm)}(f^\pm-f_0^\pm)^2,
\end{align*}
where $\zeta_\varepsilon$ lies between $f_0^\pm$ and $f^\pm+\varepsilon$.

\textit{Case 2:} $\pm\F^\pm<0$, $f_0^\pm=0$: Then
\begin{align}\label{eq:Case2}
    \E_0^\pm\ge\E_{\max}^\pm=(\vartheta^\pm)^{-1}(0)=-\del_\tau\Phi^\pm(0,\F^\pm,\G^\pm).
\end{align}
Thus, similarly as before,
\begin{align*}
    Q^\pm&\ge\Phi^\pm(f^\pm,\F^\pm,\G^\pm)-\Phi^\pm(0,\F^\pm,\G^\pm)-\del_\tau\Phi(0,\F^\pm,\G^\pm)f^\pm\\
    &=\lim_{\varepsilon\to 0^+}(\Phi^\pm(f^\pm+\varepsilon,\F^\pm,\G^\pm)-\Phi^\pm(\varepsilon,\F^\pm,\G^\pm)-\del_\tau\Phi^\pm(\varepsilon,\F^\pm,\G^\pm)f^\pm)\\
    &=\lim_{\varepsilon\to 0^+}\frac12\del_\tau^2\Phi^\pm(\zeta_\varepsilon,\F^\pm,\G^\pm)(f^\pm)^2\ge\frac{c_{\vartheta^\pm}}{2\psi^\pm(\F^\pm,\G^\pm)}(f^\pm)^2\\
    &=\frac{c_{\vartheta^\pm}}{2\psi^\pm(\F^\pm,\G^\pm)}(f^\pm-f_0^\pm)^2.
\end{align*}

\textit{Case 3:} $\pm\F^\pm>0$, $f_0^\pm=0$: We have 
\[Q^\pm=\big(\pm\F^\pm+\xi(r_0)+\tfrac12|p|^2\pm U_0 \big)f^\pm.\] 
To prove that $Q^\pm\ge 0$ it suffices to show that 
\[q^\pm\coloneqq \pm\F^\pm+\xi(r_0)+\frac12|p|^2\pm U_0\ge 0.\] 
To this end, we consider four sub-cases; notice that we already know that $\pm(p_\varphi\pm A_\varphi^0)>0$ due to $\pm\F^\pm>0$, and have in mind \eqref{eq:r_0}:

\textit{Case 3.1:} $r\le r_0$. Clearly, \[q^\pm\ge\xi(r_0)\pm U_0\ge 0.\]
\textit{Case 3.2:} $r>r_0$, $\pm p_\varphi\ge 0$: Here, it holds that \[q^\pm\ge rA_\varphi^0\pm U_0\ge 2|U_0|\pm U_0\ge 0.\]
\textit{Case 3.3:} $r>r_0$, $0<\mp p_\varphi<A_\varphi^0\le r$. Since the function $y\mapsto\frac12y^2-ry$ is monotonically decreasing on $[0,A_\varphi^0]\subset[0,r]$, we have \begin{align*}
    q^\pm&\ge rA_\varphi^0+\frac12(\mp p_\varphi)^2-r(\mp p_\varphi)\pm U_0\ge rA_\varphi^0+\frac12(A_\varphi^0)^2-rA_\varphi^0\pm U_0=\frac12(A_\varphi^0)^2\pm U_0\\
    &\ge|U_0|\pm U_0\ge 0.
\end{align*}
\textit{Case 3.4:} $r>r_0$, $0<\mp p_\varphi<A_\varphi^0$, $A_\varphi^0>r$: Since the function $y\mapsto\frac12y^2-ry$ attains its global minimum at $y=r$, we have 
\begin{align*}
    q^\pm&\ge rA_\varphi^0+\frac12(\mp p_\varphi)^2-r(\mp p_\varphi)\pm U_0\ge rA_\varphi^0+\frac12r^2-r^2\pm U_0\ge\frac12r^2\pm U_0\\
    &\ge|U_0|\pm U_0\ge 0.
\end{align*}
Thus, we always have $Q^\pm\ge 0$ in Case 3.

Combining all three cases, we obtain
\begin{align}
    \H_\C(f)-\H_\C(f_0)\ge\sum_\pm\frac{c_{\vartheta^\pm}}{2}\iint\limits_{\{\pm\F^\pm<0\}}\frac{(f^\pm-f_0^\pm)^2}{\psi^\pm(\F^\pm,\G^\pm)}\dv\dx.\label{eq:HC_low}
\end{align}

\subsubsection{Upper estimates on the expansion}
In the following we denote $\SS(h)\coloneqq\max\{\RR(h),R\}$ for any axially symmetric $h\in C_c(\R^5)^j$ ($j\in\{1,2\}$). Now we rewrite \eqref{eq:HC_expansion} as
\begin{align*}
    \H_\C(f)-\H_\C(f_0)&=\sum_\pm\left(\;\iint\limits_{\{\pm\F^\pm<0\}}Q^\pm\dv\dx+\iint\limits_{\{\pm\F^\pm>0\}}Q^\pm\dv\dx\right)\nonumber\\
    &\phantom{=\;}-\int_{\R^2}\int_{\R^2}\ln|x-y|(\rho_f-\rho_0)(y)(\rho_f-\rho_0)(x)\dy\dx,
\end{align*}
and estimate the three terms separately. In the case $\pm\F^\pm<0$ we have
\begin{align*}
    Q^\pm&\le|\Phi^\pm(f^\pm,\F^\pm,\G^\pm)-\Phi^\pm(f_0^\pm,\F^\pm,\G^\pm)|+|\E_0^\pm(f^\pm-f_0^\pm)|\\
    &\le \left(c_\pm+\frac{c_\pm\max\{\|f^\pm\|_\infty,\|f_0^\pm\|_\infty\}}{\psi^\pm(\F^\pm,\G^\pm)}+\frac12|p|^2+\xi\left(\SS(f^\pm1_{\{\pm\F^\pm<0\}})\right)\right)|f^\pm-f_0^\pm|
\end{align*}
by \eqref{eq:Phi_1st_der_bound_1}. In the case $\pm\F^\pm>0$ it holds that
\begin{align*}
    Q^\pm&\le|\Phi^\pm(f^\pm,\F^\pm,\G^\pm)|+|\E_0^\pm f^\pm|\\
    &\le\left(\pm\F^\pm+\xi(r_0)+\frac12|p|^2+\xi\left(\RR(f^\pm1_{\{\pm\F^\pm>0\}})\right)\right)f^\pm\\
    &=\left(\pm\F^\pm+\xi(r_0)+\frac12|p|^2+\xi\left(\RR(f^\pm1_{\{\pm\F^\pm>0\}})\right)\right)|f^\pm-f_0^\pm|.
\end{align*}
Lastly, making use of the assumption that $\rho_f$ and $\rho_0$ have the same total charge and applying Lemma~\ref{lem:estimates_potential_energy}(a), we find that
\begin{align*}
    &-\int_{\R^2}\int_{\R^2}\ln|x-y|\;(\rho_f-\rho_0)(y)\;(\rho_f-\rho_0)(x)\dy\dx\\
    &=-\int_{\R^2}\int_{\R^2}\ln\frac{|x-y|}{2\SS(f)}\,(\rho_f-\rho_0)(y)\,(\rho_f-\rho_0)(x)\dy\dx\\
    &\phantom{=\;}-\ln(2\SS(f))\left(\int_{\R^2}(\rho_f-\rho_0)\dx\right)^2\\
    &= -\int_{\mathcal B_{\SS(f)}}\int_{\mathcal B_{\SS(f)}}\ln\frac{|x-y|}{2\SS(f)}(\rho_f-\rho_0)(y)(\rho_f-\rho_0)(x)\dy\dx\\
    &\le-\int_{\mathcal B_{\SS(f)}}\int_{\mathcal B_{\SS(f)}}\ln\frac{|x-y|}{2\SS(f)}\; |\rho_f-\rho_0|(y) \; |\rho_f-\rho_0|(x)  \dy\dx\\
    &=-\int_{\R^2}\int_{\R^2}\ln|x-y|\;|\rho_f-\rho_0|(y)\;|\rho_f-\rho_0|(x)\dy\dx+\ln(2\SS(f))\|\rho_f-\rho_0\|_1^2\\
    &\le\frac12\|\rho_f-\rho_0\|_1\int_{\R^2}|\rho_f-\rho_0|\; \ln\frac{|\rho_f-\rho_0|}{\|\rho_f-\rho_0\|_1}\dx+(C +\ln(2\SS(f)))\|\rho_f-\rho_0\|_1^2.
\end{align*}
In summary, we have
\begin{align}
    &\H_\C(f)-\H_\C(f_0)\nonumber\\
    &\le\sum_\pm\Bigg[\;\;
    \iint\limits_{\{\pm\F^\pm<0\}}
        \Big(c_\pm+\frac{c_\pm\max\{\|f^\pm\|_\infty,\|f_0^\pm\|_\infty\}}{\psi^\pm(\F^\pm,\G^\pm)}+\frac12|p|^2 \Big) \; |f^\pm-f_0^\pm|\dv\dx\nonumber\\
    &\qquad\quad +\iint\limits_{\{\pm\F^\pm<0\}} 
        \xi\big(\SS(f^\pm1_{\{\pm\F^\pm<0\}})\big) \; |f^\pm-f_0^\pm|\dv\dx\nonumber\\
    &\qquad\quad +\iint\limits_{\{\pm\F^\pm>0\}}
        \Big(\pm\F^\pm+\xi(r_0)+\frac12|p|^2+\xi\big(\RR(f^\pm1_{\{\pm\F^\pm>0\}})\big)\Big)|f^\pm-f_0^\pm|\dv\dx\Bigg]\nonumber\\
    &\phantom{\le\;}+\frac12\|\rho_f-\rho_0\|_1\int_{\R^2}|\rho_f-\rho_0|\;\ln\frac{|\rho_f-\rho_0|}{\|\rho_f-\rho_0\|_1}\dx+(C+\ln(2\SS(f)))\|\rho_f-\rho_0\|_1^2.\label{eq:HC_up}
\end{align}

\subsubsection{The stability result}

For any initial data $\mathring f = (\mathring f^+,\mathring f^-) \in X$, we now write $f$ to denote the unique classical solution of \eqref{eq:WholeSystem} to the external vector potential $A^0$.
Recall that $\H_\C(f)$ is conserved in time. We can thus combine \eqref{eq:HC_low} and \eqref{eq:HC_up} written for $\mathring f\in X$ to obtain a stability estimate on the set $\{\pm\F^\pm<0\}$ (see \eqref{EST:STA1}). Moreover, it is obvious that all $L^q$-norms of $f^\pm-f_0^\pm$ over $\{\pm\F^\pm\ge 0\}$ are constant time, since $f_0^\pm=0$ if $\pm\F^\pm>0$ and $\{\pm\F^\pm\ge 0\}$ is invariant under the characteristic flow of the Vlasov equation with external vector potential $A^0$. Altogether, this means that the following theorem is established.

\begin{theorem}
\label{thm:Stability}
We consider the $\theta$-pinch configuration and assume that \emph{\ref{COND:S1}--\ref{COND:S4}} hold. Let $f_0$ denote the steady state introduced in \eqref{DEF:F0}. 

For any initial data $\mathring f = (\mathring f^+,\mathring f^-) \in X$ let $f=(f^+,f^-)$ denote the unique classical solution of \eqref{eq:WholeSystem}.

Then for all $q\in[1,\infty]$,
\begin{align}\label{EST:STA0}
    \|f^\pm(t)-f_0^\pm\|_{L^q(\{\pm\F^\pm\ge 0\})}=\|\mathring f^\pm-f_0^\pm\|_{L^q(\{\pm\F^\pm\ge 0\})},
\end{align}
and there exist constants $C>0$ independent of all the appearing quantities, and $c_1^\pm,c_2^\pm>0$ depending only on $\vartheta^\pm$ and $\psi^\pm$ such that for all $\mathring f \in X$ and $t\ge 0$, 
\begin{align}
    &\sum_\pm c_1^\pm\iint\limits_{\{\pm\F^\pm<0\}}\frac{(f^\pm(t)-f_0^\pm)^2}{\psi^\pm(\F^\pm,\G^\pm)}\dv\dx\nonumber\\
    &\le\sum_\pm\Bigg[\;\iint\limits_{\{\pm\F^\pm<0\}}
        \Big(c_2^\pm+\frac{c_2^\pm\max\{\|\mathring f^\pm\|_\infty,\|f_0^\pm\|_\infty\}}{\psi^\pm(\F^\pm,\G^\pm)}+\frac12|p|^2 \Big)
        \,|\mathring f^\pm-f_0^\pm|\dv\dx
    \nonumber\\
    &\qquad\quad +\iint\limits_{\{\pm\F^\pm<0\}}
        \xi\big(\SS(\mathring f^\pm1_{\{\pm\F^\pm<0\}}) \big) 
        \,|\mathring f^\pm-f_0^\pm|\dv\dx
    \nonumber\\
    &\qquad\quad +\iint\limits_{\{\pm\F^\pm>0\}}
        \Big(\pm\F^\pm+\xi(r_0)+\frac12|p|^2+\xi\big(\RR(\mathring f^\pm1_{\{\pm\F^\pm>0\}})\big)\Big)|\mathring f^\pm-f_0^\pm|\dv\dx\Bigg]\nonumber\\
    &\phantom{\le\;}+\frac12\|\rho_{\mathring f}-\rho_0\|_1\int_{\R^2}|\rho_{\mathring f}-\rho_0|\;\ln\frac{|\rho_{\mathring f}-\rho_0|}{\|\rho_{\mathring f}-\rho_0\|_1}\dx+\big(C+\ln(2\SS(\mathring f))\big)\|\rho_{\mathring f}-\rho_0\|_1^2\label{EST:STA1}.
\end{align}
\end{theorem}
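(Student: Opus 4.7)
The plan is to assemble the lower and upper bounds \eqref{eq:HC_low} and \eqref{eq:HC_up} derived in the preceding subsections, using conservation of the energy-Casimir functional $\H_\C = \H + \C$ along classical solutions. The proof then splits naturally into the two assertions of the theorem.

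For the identity \eqref{EST:STA0}, I would first observe that $\F^\pm$ and $\G^\pm$ are invariants of the characteristic flow $Z^\pm$ associated with the system \eqref{eq:WholeSystem} (since $A^0$ is time-independent and axially symmetric), so the sets $\{\pm\F^\pm \ge 0\}$ are invariant under this flow. Since $f_0^\pm$ is a steady state it is constant along $Z^\pm$, so is $f^\pm(t,\cdot)$, and hence so is the difference; concretely $(f^\pm(t) - f_0^\pm)(z) = (\mathring f^\pm - f_0^\pm)(Z^\pm(0,t,z))$. Combining the measure-preserving property of $Z^\pm$ with the invariance of $\{\pm\F^\pm \ge 0\}$ immediately gives \eqref{EST:STA0} for every $q \in [1,\infty]$.

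For the main estimate \eqref{EST:STA1}, I would use that $\H_\C$ is conserved along the classical solution to obtain
\begin{equation*}
   \H_\C(f(t)) - \H_\C(f_0) = \H_\C(\mathring f) - \H_\C(f_0) \qquad \text{for every } t \ge 0.
\end{equation*}
Applying the lower bound \eqref{eq:HC_low} to the left-hand side and the upper bound \eqref{eq:HC_up} (written for $\mathring f$ in place of $f$) to the right-hand side directly yields \eqref{EST:STA1}, with constants $c_1^\pm = c_{\vartheta^\pm}/2$ and $c_2^\pm = c_\pm$ inherited from the construction of $\Phi^\pm$.

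The substantive difficulty is not this assembly step but the groundwork already carried out in the preceding subsections: the careful piecewise definition of $\Phi^\pm$ ensuring $C^1$-regularity across $\tau = \vartheta_{\max}^\pm \psi^\pm$ together with the positive-definite lower bound \eqref{eq:Phi_2nd_der_pos_def_1}, and the four-case analysis (splitting according to the sign of $\F^\pm$, whether $f_0^\pm = 0$, and the sub-cases in Case~3 that exploit \ref{COND:S3} via the choice of $r_0$) needed to control every contribution to $\H_\C(f) - \H_\C(f_0)$ from below. A further subtle point, specific to the two-dimensional setting, is the treatment of the logarithmic potential-energy term: the charge-neutrality built into the definition of $X$ is precisely what allows Lemma \ref{lem:estimates_potential_energy}(b) to supply the positivity used in \eqref{eq:HC_low}, since the kernel $-\ln|\cdot|$ is not of a fixed sign. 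Once these are in hand, the theorem is obtained by the conservation argument above.
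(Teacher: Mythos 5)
Your treatment of the main estimate \eqref{EST:STA1} is exactly the paper's argument: conservation of $\H_\C$ along the classical solution gives $\H_\C(f(t))-\H_\C(f_0)=\H_\C(\mathring f)-\H_\C(f_0)$ for all $t\ge 0$, the lower bound \eqref{eq:HC_low} is applied to $f(t)$ on the left and the upper bound \eqref{eq:HC_up}, written for $\mathring f$, on the right, yielding \eqref{EST:STA1} with $c_1^\pm=c_{\vartheta^\pm}/2$ and $c_2^\pm=c_\pm$. Your summary of where the real work lies (the construction of $\Phi^\pm$, the case analysis, and the role of charge neutrality for Lemma~\ref{lem:estimates_potential_energy}(b)) is also accurate.

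However, your justification of \eqref{EST:STA0} contains an error. You assert that $f_0^\pm$ is constant along the characteristic flow $Z^\pm$ of the perturbed solution \enquote{since $f_0^\pm$ is a steady state}. This is false in general: $Z^\pm$ is driven by the perturbed self-consistent field $-\delx U(t)$, whereas $f_0^\pm$ is invariant only under the flow generated by $-\delx U_0$. Concretely, $f_0^\pm=\vartheta^\pm(\E_0^\pm)\,\psi^\pm(\F^\pm,\G^\pm)$ with $\E_0^\pm=\frac12|p|^2\pm U_0$; the quantities $\F^\pm$ and $\G^\pm$ are indeed conserved along $Z^\pm$ by axial symmetry, but $\E_0^\pm$ is not, because the energy conserved along $Z^\pm$ involves $U(t)$ rather than $U_0$. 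Consequently the identity $(f^\pm(t)-f_0^\pm)(z)=(\mathring f^\pm-f_0^\pm)(Z^\pm(0,t,z))$ fails on $\{\pm\F^\pm<0\}$, where $f_0^\pm$ is generally nonzero. It does hold on the set $\{\pm\F^\pm\ge 0\}$ that you actually need, but for a different reason --- the one the paper invokes: $f_0^\pm\equiv 0$ on $\{\pm\F^\pm>0\}$ (and $\{\F^\pm=0\}$ is a Lebesgue null set), so on this flow-invariant set the difference coincides with $f^\pm(t)$ itself, whose $L^q$-norms over the set are conserved by the measure-preserving property of $Z^\pm$. With this correction the proof is complete and coincides with the paper's.
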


\bigskip

\begin{remark}\label{rem:STA}
If $\psi^\pm$ is additionally assumed to be bounded, then from \eqref{EST:STA0} and \eqref{EST:STA1} the (slightly weaker) estimate \newpage
\begin{align*}
    &\sum_\pm \norm{f^\pm(t)-f_0^\pm}_{L^2(\R^2\times\R^3)}^2\nonumber\\
    &\le c^\pm\Bigg\{\sum_\pm\Bigg[\;\iint\limits_{\{\pm\F^\pm<0\}}
        \Big(1+\frac{\max\{\|\mathring f^\pm\|_\infty,\|f_0^\pm\|_\infty\}}{\psi^\pm(\F^\pm,\G^\pm)}+\frac12|p|^2 \Big)
        \,|\mathring f^\pm-f_0^\pm|\dv\dx
    \nonumber\\
    &\qquad\quad +\iint\limits_{\{\pm\F^\pm<0\}}
        \xi\big(\SS(\mathring f^\pm1_{\{\pm\F^\pm<0\}}) \big) 
        \,|\mathring f^\pm-f_0^\pm|\dv\dx
    \nonumber\\
    &\qquad\quad +\iint\limits_{\{\pm\F^\pm>0\}}
        \Bigg(\pm\F^\pm+\xi(r_0)+\frac12|p|^2+\xi\big(\RR(\mathring f^\pm1_{\{\pm\F^\pm>0\}})\big)\Bigg)|\mathring f^\pm-f_0^\pm|\dv\dx\Bigg]\nonumber\\
    &\phantom{\le\;}+\frac12\|\rho_{\mathring f}-\rho_0\|_1\int_{\R^2}|\rho_{\mathring f}-\rho_0|\;\ln\frac{|\rho_{\mathring f}-\rho_0|}{\|\rho_{\mathring f}-\rho_0\|_1}\dx+\big(C+\ln(2\SS(\mathring f))\big)\|\rho_{\mathring f}-\rho_0\|_1^2\Bigg\}\nonumber\\
    &\phantom{\le\;}+\|\mathring f^\pm-f_0^\pm\|_{L^2(\{\pm\F^\pm\ge 0\})}^2
\end{align*}
follows, with constants $c^\pm>0$ depending only on $\vartheta^\pm$ and $\psi^\pm$.
\end{remark}

\subsection{Stability with respect to perturbations of the magnetic vector\\ potential}

The continuous dependence estimate presented in Theorem~\ref{THM:ContDep} can be used to prove stability of the confined steady state with respect to perturbations of the external magnetic field.

\begin{theorem}
\label{thm:Stability_B}
Let $f_0=(f_0^+,f_0^-)$ denote an arbitrary steady state (in the sense of Definition~\ref{DEF:SS}) with external magnetic vector potential $A^0$ and the associate magnetic field $B^0 = \curl_x A^0$.

Moreover, let $A\in C^2(\R^2;\R^3)$ be an arbitrary (not necessarily axially symmetric) external magnetic vector potential, let $B= \curl_x A$ denote the associated magnetic field,
and let $f$ denote the corresponding classical solution of the Vlasov--Poisson system \eqref{eq:WholeSystem} to the the initial condition $f^\pm(0) = f_0^\pm$ on $\R^2\times\R^3$.

Then, for any $\gamma>4$, there exists a real number $q\in(2,\infty)$ depending only on $\gamma$ as well as a constant $c>0$ depending only on $\H\big(f_0\big)$, $\norm{f_0^\pm}_1$, $\norm{f_0^\pm}_\infty$ and $\gamma$
such that for all $t\ge 0$,
\begin{align*}
        \sum_\pm\norm{f^\pm(t)-f_0^\pm}_2
        &\le \alpha_{f_0} 
        \exp\big(\beta_{f_0}\, c(1+t)^\gamma\big)
        \norm{B-B^0}_{L^1(0,t;L^2(\R^2))}
    \end{align*}
    where 
    \begin{align*}
        \alpha_{f_0} := 2\underset{\pm}{\max}
        \norm{p\times \delp f_0^\pm}_{L^\infty(\R^2;L^2(\R^3))}, 
        \quad
        \beta_{f_0} := 2\underset{\pm}{\max}
        \norm{\delp f_0^\pm}_{L^q(\R^2;L^2(\R^3))}.
\end{align*}
\end{theorem}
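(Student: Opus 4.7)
The key observation is that this result is essentially a direct corollary of the continuous dependence estimate in Theorem~\ref{THM:ContDep}. The plan is to apply that theorem with one of the two solutions being the steady state itself and then simplify the right-hand side using the fact that $f_0$ does not depend on time.

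More precisely, by the definition of a steady state (Definition~\ref{DEF:SS}), the pair $f_0=(f_0^+,f_0^-)$ viewed as a time-independent function satisfies the Vlasov--Poisson system \eqref{eq:WholeSystem} with external vector potential $A^0$ and initial datum $f_0$. Hence, by the uniqueness statement in Theorem~\ref{thm:globalWP}, $f_0$ \emph{is} the unique classical global-in-time solution corresponding to the initial datum $f_0$ and the external vector potential $A^0$. First, I would verify that the hypotheses of Theorem~\ref{THM:ContDep} are satisfied in this setting: namely, that $f_0^\pm\in C^1_c(\R^5;\R_0^+)$ (compact support follows from the confinement and the support of $f_0$ in $p$ inherited from the construction, and the required $C^1$-regularity is provided by Definition~\ref{DEF:SS}), and that both $A^0$ and $A$ are in $C_b([0,\infty[;C^2_b(\R^2;\R^3))$ (viewing the time-independent vector potentials as constant in time).

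Next, I would set $A^1\coloneqq A$ with associated solution $f_1\coloneqq f$, and $A^2\coloneqq A^0$ with associated solution $f_2\coloneqq f_0$, and invoke Theorem~\ref{THM:ContDep}. For any fixed $\gamma>4$, this yields the existence of $q\in(2,\infty)$ depending only on $\gamma$ and a constant $c>0$ depending only on $\H(f_0)$, $\|f_0^\pm\|_1$, $\|f_0^\pm\|_\infty$ and $\gamma$, such that
\[
\sum_\pm\|f^\pm(t)-f_0^\pm\|_2
\le a_{f_0}(t)\exp\big(b_{f_0}(t)\,c(1+t)^\gamma\big)\,\|B-B^0\|_{L^1(0,t;L^2(\R^2))}
\]
for every $t\ge 0$, where $a_{f_0}(t)$ and $b_{f_0}(t)$ are the time-dependent quantities defined in Theorem~\ref{THM:ContDep}.

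Finally, since $f_0$ is independent of time, the suprema over $s\in[0,t]$ appearing in the definitions of $a_{f_0}(t)$ and $b_{f_0}(t)$ are trivial, so these quantities reduce to the time-independent constants
\[
a_{f_0}(t)=\alpha_{f_0}=2\underset{\pm}{\max}\,\|p\times\partial_p f_0^\pm\|_{L^\infty(\R^2;L^2(\R^3))},\quad b_{f_0}(t)=\beta_{f_0}=2\underset{\pm}{\max}\,\|\partial_p f_0^\pm\|_{L^q(\R^2;L^2(\R^3))},
\]
which yields exactly the claimed estimate. There is no genuine obstacle here; the only points requiring care are the regularity/compact-support verifications needed to apply Theorem~\ref{THM:ContDep} and the observation that stationarity collapses the time-dependent prefactors to constants.
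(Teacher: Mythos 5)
Your argument is exactly the paper's proof: the authors also obtain Theorem~\ref{thm:Stability_B} by applying Theorem~\ref{THM:ContDep} with one solution taken to be the (time-independent) steady state, whereupon the suprema in $a_{f_0}(t)$ and $b_{f_0}(t)$ collapse to the constants $\alpha_{f_0}$ and $\beta_{f_0}$. Your additional remarks on verifying the hypotheses of Theorem~\ref{THM:ContDep} are sensible but not needed beyond what the paper already assumes.
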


\begin{proof}
The assertion follows directly from Theorem~\ref{THM:ContDep} and the fact that $f_0$ is constant in time.
\end{proof}

\subsection{Stability with respect to perturbations of both the initial data \\ and the magnetic vector potential}


We can now combine our results to obtain the following theorem.

\begin{theorem}
\label{thm:Stability_C}
We consider the $\theta$-pinch configuration and assume that \emph{\ref{COND:S1}--\ref{COND:S4}} hold and that $\psi^\pm$ is bounded.
Let $f_0=(f_0^+,f_0^-)$ denote the steady state that was introduced in \eqref{DEF:F0}. Recall that $f_0$ corresponds to the axially symmetric external magnetic vector potential $A^0\in C^2(\R^2;\R^3)$ and the associated magnetic field $B^0 =\curl_xA^0$. 

Moreover, let $A\in C^2(\R^2;\R^3)$ be an arbitrary (not necessarily axially symmetric) external magnetic vector potential, and let $B=\curl_xA $ denote the associated magnetic field.
For any given initial data $\mathring f = (\mathring f^+,\mathring f^-) \in X$, let $f=(f^+,f^-)$ and $f_*=(f_*^+,f_*^-)$ denote the corresponding classical solutions of \eqref{eq:WholeSystem} to the external magnetic potentials $A$ and $A^0$, respectively. 

Furthermore, let $\gamma>4$ be arbitrary. Then there exist
\begin{itemize}
    \item a constant $C>0$ independent of all the appearing quantities,
    \item constants $c^\pm>0$ depending only on $\vartheta^\pm$ and $\psi^\pm$,
    \item a real number $q\in(2,\infty)$ depending only on $\gamma$,
    \item a constant $c_*>0$ depending only on $\H\big(\mathring f\big)$, $\bignorm{\mathring f^\pm}_1$, $\bignorm{\mathring f^\pm}_\infty$ and $\gamma$,
\end{itemize}
such that for all $t\ge 0$,
\begin{align*}
        &\sum_\pm\norm{f^\pm(t)-f_0^\pm}_2\nonumber\\
        &\le \Bigg\{c^\pm\Bigg[\sum_\pm\Bigg(\;\iint\limits_{\{\pm\F^\pm<0\}}
        \Big(1+\frac{\max\{\|\mathring f^\pm\|_\infty,\|f_0^\pm\|_\infty\}}{\psi^\pm(\F^\pm,\G^\pm)}+\frac12|p|^2 \Big)
        \,|\mathring f^\pm-f_0^\pm|\dv\dx
    \nonumber\\
    &\qquad\quad +\iint\limits_{\{\pm\F^\pm<0\}}
        \xi\big(\SS(\mathring f^\pm1_{\{\pm\F^\pm<0\}}) \big) 
        \,|\mathring f^\pm-f_0^\pm|\dv\dx
    \nonumber\\
    &\qquad\quad +\iint\limits_{\{\pm\F^\pm>0\}}
        \Big(\pm\F^\pm+\xi(r_0)+\frac12|p|^2+\xi\big(\RR(\mathring f^\pm1_{\{\pm\F^\pm>0\}})\big)\Big)|\mathring f^\pm-f_0^\pm|\dv\dx\Bigg)\nonumber\\
    &\phantom{\le\;}+\frac12\|\rho_{\mathring f}-\rho_0\|_1\int_{\R^2}|\rho_{\mathring f}-\rho_0|\ln\frac{|\rho_{\mathring f}-\rho_0|}{\|\rho_{\mathring f}-\rho_0\|_1}\dx+\big(C+\ln(2\SS(\mathring f))\big)\|\rho_{\mathring f}-\rho_0\|_1^2\Bigg]\nonumber\\
    &\phantom{\le\;}+\|\mathring f^\pm-f_0^\pm\|_{L^2(\{\pm\F^\pm\ge 0\})}^2\Bigg\}^{1/2}\nonumber\\
        &\phantom{\le\;}+\alpha_{f_*}(t)
        \exp\big(\beta_{f_*}(t)\, c_*(1+t)^\gamma\big)
        \norm{B-B^0}_{L^1(0,t;L^2(\R^2))},
\end{align*}
where 
\begin{align*}
    \F^\pm&:=r(p_\varphi\pm A_\varphi^0),\\
    \G^\pm&:=p_3\pm A_3^0,\\
    a_{f_*}(t) &:= 2\underset{\pm}{\max}\; \underset{s\in[0,t]}{\max} 
    \norm{p\times \delp f_*^\pm(s)}_{L^\infty(\R^2;L^2(\R^3))},\\ 
    b_{f_*}(t) &:= 2\underset{\pm}{\max}\; \underset{s\in[0,t]}{\max} 
    \norm{\delp f_*^\pm(s)}_{L^q(\R^2;L^2(\R^3))}.
\end{align*}
\end{theorem}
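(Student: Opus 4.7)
The natural strategy is to split the error into a "pure initial data perturbation" part and a "pure magnetic field perturbation" part via the triangle inequality, applied with $f_*$ as the intermediate object. That is, I write
\begin{align*}
    \sum_\pm \norm{f^\pm(t) - f_0^\pm}_2
    \le \sum_\pm \norm{f^\pm(t) - f_*^\pm(t)}_2
    + \sum_\pm \norm{f_*^\pm(t) - f_0^\pm}_2.
\end{align*}
Here $f_*$ solves \eqref{eq:WholeSystem} with the same initial datum $\mathring f \in X$ as $f$, but with the steady-state magnetic vector potential $A^0$ instead of $A$. In particular, $f_*$ lives in the same ''symmetry class'' as the steady state and is subject to exactly the framework of Subsection~\ref{sec:Pert_Initdata}.

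For the second summand, I would apply the version of Theorem~\ref{thm:Stability} phrased in Remark~\ref{rem:STA} (which uses the boundedness of $\psi^\pm$) to the pair $(f_*,f_0)$. This bounds $\sum_\pm \norm{f_*^\pm(t) - f_0^\pm}_2^2$ by exactly the bracketed initial-data expression that appears inside the curly braces of the statement of Theorem~\ref{thm:Stability_C}. Taking the square root accounts for the exponent $1/2$ on that big term.

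For the first summand, the key observation is that $f$ and $f_*$ share the same initial data $\mathring f$ but correspond to the magnetic vector potentials $A$ and $A^0$, respectively. Hence, Theorem~\ref{THM:ContDep} applies directly, with the roles $f_1 = f$ and $f_2 = f_*$. Choosing $f_*$ as the reference solution is what makes the derivative terms $a_{f_*}(t)$ and $b_{f_*}(t)$ appear on the right-hand side of the bound, matching the statement. The constant $c_*$ in Theorem~\ref{thm:Stability_C} then comes directly from the constant $c$ of Theorem~\ref{THM:ContDep}, which depends only on $\H(\mathring f)$, $\bignorm{\mathring f^\pm}_1$, $\bignorm{\mathring f^\pm}_\infty$, and $\gamma$.

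There is no real obstacle here beyond the bookkeeping: both ingredients are already established. The only point deserving care is to verify that $f_* \in X$-compatible hypotheses are in force so that Theorem~\ref{thm:Stability}/Remark~\ref{rem:STA} applies, and that the reference solution chosen in Theorem~\ref{THM:ContDep} is indeed $f_*$ so that the resulting prefactors agree with the $a_{f_*}(t)$, $b_{f_*}(t)$ in the claim. Adding the two estimates and relabeling constants yields the bound in the statement.
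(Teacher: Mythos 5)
Your proposal is correct and follows exactly the paper's own argument: the decomposition $f-f_0=(f-f_*)+(f_*-f_0)$, with Theorem~\ref{THM:ContDep} applied to $f-f_*$ (sharing initial data, differing potentials) and Remark~\ref{rem:STA} applied to $f_*-f_0$. The bookkeeping of where $a_{f_*}$, $b_{f_*}$ and $c_*$ come from matches the paper as well.
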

\begin{proof}
The assertion follows directly from the decomposition $f-f_0=(f-f_*)+(f_*-f_0)$ and using the estimates of Theorem \ref{THM:ContDep} for $f-f_*$ and of Remark \ref{rem:STA} for $f_*-f_0$.
\end{proof}

\begin{remark} \normalfont
Unfortunately, in Theorem~\ref{thm:Stability_C}, the constant $c_*$ depends on the perturbed initial data $\mathring f$, and the functions $a_{f_*}$ and $b_{f_*}$ depend on the solution $f_*$ which corresponds to the perturbed initial data $\mathring f$. However, under suitable assumptions, it is possible to show that these quantities depend only on $f_0$, $A^0$ and $\gamma$.
\begin{enumerate}[label = $\textnormal{(\alph*)}$, leftmargin = *]
    \item Suppose that
\begin{align}
    \label{ASS:DELTA}
    \big|\H\big(\mathring f\big) - \H\big(f_0\big)\big| 
    \le \delta,
    \quad
    \bignorm{\mathring f^\pm - f_0^\pm}_1
    \le \delta,
    \quad
    \bignorm{\mathring f^\pm - f_0^\pm}_\infty
    \le \delta
\end{align}
for a prescribed real number $\delta>0$. Then, because of
\begin{align*}
    \big|\H\big(\mathring f\big)\big| \le \big|\H\big(f_0\big)\big| 
    + \delta,
    \quad
    \bignorm{\mathring f^\pm}_1 \le \bignorm{f_0^\pm}_1
    + \delta,
    \quad
    \bignorm{\mathring f^\pm}_\infty \le  \bignorm{f_0^\pm}_\infty
    + \delta,
\end{align*}
the number $c_*$ can be replaced by a positive constant depending only on $\H\big(f_0\big)$, $\bignorm{f_0^\pm}_1$, $\bignorm{f_0^\pm}_\infty$, $\gamma$ and $\delta$.
\item Proceeding as in \cite[Lemma~6]{knopf}, one can show that there exist increasing functions $\alpha,\beta\in C([0,\infty[)$ depending only on $A^0$, $\bignorm{\mathring f^\pm}_\infty$, and $\gamma$ such that
\begin{align*}
    a_{f_*}(t) \le \alpha(t)
    \quad\text{and}\quad
    b_{f_*}(t) \le \beta(t)
\end{align*}
for all $t\ge 0$. However, we remark that due to the Gronwall argument employed in \cite[Lemma~6]{knopf}, these functions $\alpha$ and $\beta$ exhibit exponential growth. Under the assumptions \eqref{ASS:DELTA}, $\alpha$ and $\beta$ could be replaced by similar functions depending only on $A^0$, $\bignorm{ f_0^\pm}_\infty$, $\gamma$ and $\delta$.
\end{enumerate}
\end{remark}

\paragraph{Acknowledgement} This project has received funding from the European Research Council (ERC) under the European Union’s Horizon 2020 research and innovation programme (grant agreement no 678698).





\footnotesize

\bibliographystyle{plain}
\bibliography{KW_SSVP}
%

\end{document}